\newcommand{\lmref}[1]{\hyperref[#1]{Lemma \ref*{#1}}}
\newcommand{\prref}[1]{\hyperref[#1]{Proposition \ref*{#1}}}
\newcommand{\GS}{T_{\text{g}}}
\newcommand{\GW}{Goemans-Williamson\ }
\newcommand{\mdc}{:}
\newcommand{\fref}[1]{\autoref{#1}}
\newcommand{\sref}[1]{\autoref{#1}}
\newcommand{\Sref}[1]{\autoref{#1}}
\newcommand{\tref}[1]{\autoref{#1}}
\newcommand{\aref}[1]{\autoref{#1}}
\renewcommand{\cross}{\textsc{cut}_{01}}
\newcommand{\inner}[2]{\textsc{cut}_{#1}\left(#2\right)}
\newcommand{\best}{\textsc{opt-rem}}
\newcommand{\cxi}{\textsc{rem}}
\newcommand{\poly}{\mathrm{poly}}
\newcommand{\cover}{\textsc{cover}}
\newcommand{\zo}[1]{\{0,1\}^{#1}}
\newcommand{\mxct}{\textsc{Max-Cut}}
\newcommand{\sbest}{\textsc{opt-rem'}}
\DeclareMathOperator{\meop}{e}
\newcommand{\me}[1]{\meop\limits^{#1}}
\newcommand{\sfrac}[2]{#1/#2}
\newcommand{\mi}{\mathrm{i}}
\newcommand{\rhox}[1]{\Gamma_{x,#1}}
\newcommand{\etal}{\textit{et al. }}
\newcommand{\tian}[1]{\textcolor{red}{#1}}
\newcommand{\added}[1]{\textcolor{purple}{#1}}
\DeclareMathOperator*{\argmax}{argmax}
\newcommand{\ER}{Erdős-Rényi }
\newcommand{\bgb}{\bm{\beta},\bm{\gamma}}
\newtheorem{prop}{Proposition}
\newtheorem{lemma}{Lemma}
\newcommand{\cmt}[1]{}
\newcommand{\cxiss}{\cxi\left(s_0,s_1\right)}
\newenvironment{npenv}[1][\unskip]{%
\par\vspace{0.3cm}
\noindent
\textbf{#1}
\noindent \it}
{\vspace{0.3cm}}
\begin{document}




\title[QAOA with fewer qubits: a coupling framework]{QAOA with fewer qubits: a coupling framework to solve larger-scale Max-Cut problem}

\author[1,2]{Yiren Lu}
\author[1,2]{Guojing Tian}
\author[1,2]{Xiaoming Sun}

\address{State Key Lab of Processors, Institute of Computing Technology, CAS, Beijing 100190, China}
\address{School of Computer Science and Technology, University of Chinese Academy of Sciences, Beijing 100049, China}


\begin{abstract}
  Maximum cut (Max-Cut) problem is one of the most important combinatorial optimization problems because of its various applications in real life, and recently Quantum Approximate Optimization Algorithm (QAOA) has been widely employed to solve it. However, as the size of the problem increases, the number of qubits required will become larger. With the aim of saving qubits, we propose a coupling framework for designing QAOA circuits to solve larger-scale Max-Cut problem.
  This framework relies on a classical algorithm that approximately solves a certain variant of Max-Cut, and we derive an approximation guarantee theoretically, assuming the approximation ratio of the classical algorithm and QAOA. Furthermore we design a heuristic approach that fits in our framework and perform sufficient numerical experiments, where we solve Max-Cut on various $24$-vertex \ER graphs. Our framework only consumes $18$ qubits
  and achieves $0.9950$ approximation ratio on average, which outperforms the previous methods showing $0.9778$ (quantum algorithm using the same number of qubits) and $0.9643$ (classical algorithm). The experimental results indicate our well-designed quantum-classical coupling framework gives satisfactory approximation ratio while reduces the qubit cost, which sheds light on more potential computing power of NISQ devices.
\end{abstract}

\section{Introduction}

Quantum Approximate Optimization Algorithm (QAOA) is a hybrid quantum-classical approach for combinatorial optimization problem, proposed by Farhi \etal  \cite{farhi_quantum_2014}. Here, \textit{hybrid} means that QAOA composes of a parameterized circuit (ansatz) whose measurement outcome can be mapped to the value of the objective function, and the parameters are updated with a classical optimizer that maximizes (or minimizes) the expected value of the objective function. Delegating some computational processes to the classical domain grants QAOA a comparatively shallower circuit, making it more applicable to NISQ \cite{preskill_quantum_2018} devices. Also, as the circuit for QAOA is a parameterized one, it is easier to be modified and adapted to different types of quantum computers \cite{headley_approximating_2022}. Moreover, the quantum gates required to implement a QAOA circuit are comparatively simple. For instance, the QAOA circuit for solving the maximum cut (Max-Cut) problem only comprises CNOT and single qubit rotation gates, which are much more friendly to physical realization than Toffoli gates, etc. Therefore, QAOA is more compatible with NISQ devices than other complex quantum algorithms such as Shor's algorithm \cite{shor_polynomial-time_1997}, Grover's algorithm \cite{grover_fast_1996}.


Since its introduction, QAOA has attracted considerable theoretical and experimental attention \cite{crooks_performance_2018,farhi_quantum_2022,basso_quantum_2022,harrigan_quantum_2021,wurtz_maxcut_2021,sack_quantum_2021,zhou_quantum_2020}. Starting from the original paper of QAOA  \cite{farhi_quantum_2014}, Max-Cut problem has always been used as a good example to demonstrate QAOA's capabilities. Max-Cut problem is a classical NP-hard problem in graph theory and has its applications in statistical physics and integrated circuit design. Furthermore, as a combinatorial optimization problem with simple constraints and objective function, solutions to Max-Cut problem can be potentially generalized to other hard optimization problems \cite{waldspurger_phase_2015}. This paper also investigates how to solve the Max-Cut problem with QAOA.

When using QAOA to solve Max-Cut problem on a graph, the number of qubits required is usually equal to the number of vertices in this graph.
However, the scaling of qubit numbers in quantum computer physical realization proves to be a challenging task, and the number of available qubits will likely be constrained in the near future. Thus, it remains a good question whether we can reduce the number of qubits required to deal with a specific  problem instance.

Efforts have been made in \cite{li_large-scale_2023, zhou_qaoa--qaoa_2023}, where the authors split the graph into several subgraphs and solve Max-Cut on the subgraphs independently, and then combine the cuts of subgraphs into a global cut. In the combination step, both methods only keep a \textit{small} amount of candidate solutions, determined by their optimality inside the subgraph, and then try to select an optimal combination of the kept candidate solutions in subgraphs to maximize the overall cut. These methods allow for using an arbitrary small number of qubits for a given problem instance, by recursively partitioning the graph until the number of vertices is smaller than the number of available qubits. Moreover, for graphs that innately compose of several dense subgraphs connected by a few edges, such splitting and combining strategy is reasonable for the cut is largely determined by edges inside the dense subgraphs, and the edges joining the subgraphs are relatively insignificant. But for general cases, Max-Cut is not a problem with much locality — a cut may not be so good in a subgraph, but can be a part of a nice solution for the whole graph. Therefore, due to the existence of such bold greedy reduction that discards locally-bad solutions without much analysis, it is hard to mathematically derive a strong performance guarantee for such methods. 

Considering the issues of previous works mentioned above, we propose a method that incorporates a classical approximation algorithm as part of the QAOA circuit, leaving some vertices and edges to be handled by classical computation. By combining quantum and classical computation, our method applies to more general graph instances and achieves better experimental results than previous works that only used quantum computation. In addition to numerical results, we also derive a formula that lower bounds the performance guarantee with some mathematical analysis, assuming the approximation ratio of the classical algorithm and the quantum approximate optimization. Moreover, as we want to keep the overall method at least better than the classical one, we also analyse the portion of qubits that can be saved while this condition holds.


This paper is organized as follows. \Sref{sec:prelim} reviews the definition and computational hardness of Max-Cut, along with the overall scheme of QAOA and the QAOA circuit for Max-Cut. \Sref{sec:circdes} presents a general framework of our method and analyses its performance
. \Sref{sec:simu} discusses the implementation of the classical algorithm required in this framework and gives the simulation results compared with previous works, and \sref{sec:conc} concludes with a summary of our results and some further discussions.

\section{Preliminaries}
\label{sec:prelim}

This section reviews the definition of the maximum cut (Max-Cut) problem and the structure of QAOA, which are fundamental to the following parts.

\subsection{Max-Cut problem}

As a well-known NP-hard problem, Max-Cut problem is described as follows. 

\begin{npenv}[Max-Cut.] 
  For a graph $G=(V,E)$, find a partition $(S,V-S)$ for $V$ such that
  \begin{equation}
    \# \left\{(x,y)\mdc(x,y)\in E\wedge (x\in S)\wedge (y\in V-S)\right\}
  \end{equation}
  is maximized.
  Or equivalently, assign two colors to the vertices so that
  \begin{equation}
    \# \left\{(x,y)\mdc(x,y)\in E\wedge \text{vertex $x$ and vertex $y$ have different colors}\right\}
  \end{equation}
  is maximized, where $\#S$ denotes the cardinality of set $S$.
\end{npenv}

Currently, there is no polynomial time algorithm that can solve Max-Cut accurately, so efficient algorithms proposed to solve Max-Cut problem only give approximate solutions \cite{goemans_improved_1995,karloff_how_1996,  festa_randomized_2002, khot_optimal_2007, mathieu_yet_2008}. Usually, these approaches have been benchmarked by the approximation ratio of obtained solutions, which is the value of obtained solution divided by the optimal solution.
In classical computation, the approximation ratio of any efficient algorithm is upper bounded under currently unfalsified complexity conjectures, see \tref{CLSUB}.
\begin{table}[h]
  \centering
  \begin{tabular}{ll}
    \toprule
    $\alpha_{opt}$                                                  & Assumption \\
    \midrule
    \makecell[l]{$\approx .878$ \cite{khot_optimal_2007}                       \\ (\GW algorithm)}  &  \makecell[l]{$P\not=NP$\\ \textit{unique games conjecture}}\\
    \midrule
    $\approx .941$ \cite{hastad_optimal_2001,trevisan_gadgets_1996} & $P\not=NP$ \\
    \bottomrule
  \end{tabular}

  \caption{Approximation ratio upper bound, denoted by $\alpha_{opt}$, for classical algorithms on Max-Cut.}
  \label{CLSUB}
\end{table}


But it is worth mentioning that the upper bounds apply to the worst-case approximation ratio on any graph. In other words, for any efficient classical algorithm, there will be an infinite set of extreme cases where it cannot do better than $\alpha_{\mathrm{opt}}$. But on average cases, classical algorithms usually give much better results (with approximation ratio higher than $.878$). For example, the best-known algorithm for general cases, \GW algorithm, achieves no less than $.878$ on any graph, but on \ER random graphs, it usually shows over $.9$-approximation in experiments. So the reader might not be confused when seeing such cases, and when comparing a quantum algorithm with classical ones experimentally, we need to do case-by-case comparison, as indicated in \cite{crooks_performance_2018}. 

Now we describe two classical algorithms which will be mentioned in following sections.
\paragraph{Naive random algorithm}
\label{naivealgo}
For a graph $G$, we independently assign each vertex a random color. This results a random algorithm with $\frac 12$ approximation ratio, namely the expected output is at least half of the optimal solution.
\begin{algorithm}[ht]
  \caption{Naive Max-Cut algorithm}
  \label{MC2APPROX}
  \KwData {A graph $G=(V,E)$.}
  \KwResult {A cut that achieves expected $\frac 12$-approximation.}
  $S \gets \emptyset$;

  \For{$u \in V$}{
    $u$ is added to $S$ with $\frac 12$ probability;
  }
  \Return $(S,V-S)$;
\end{algorithm}









\paragraph{Local search} A local search algorithm starts from an
arbitrary feasible solution to the problem, and incrementally modifies the solution until reaching a local optimum. 
For the case of Max-Cut, a local search algorithm might look like \aref{LSALGO}. In the pseudo code, $S\vartriangle \{u\}$ means symmetric difference of $S$ and $\{u\}$, i.e., removing $u$ from $S$ if $u\in S$ or add $u$ to $S$ if $u\not\in S$. The procedure is bound to stop in polynomial iterations, and guarantees to give a solution with at least $\frac 12$ approximation ratio.
\begin{algorithm}[ht]
  \caption{Local search for Max-Cut}
  \label{LSALGO}
  \KwData {A graph $G=(V,E)$.}
  \KwResult {A cut that achieves $\frac 12$-approximation.}
  $S\gets S_0$\Comment*[r]{some initial partition solution}

  \While{$\exists u, \textsc{cut}(S\vartriangle \{u\}) > \textsc{cut}(S)$}{
    $S\gets S\vartriangle \{u\}$; 
  }
  \Return $(S,V-S)$;
\end{algorithm}

Though local search does not theoretically ensure a high approximation ratio, it does demonstrates good performance in experiments. Moreover, it is quite simple and scalable, so in later sections we implement our heuristic method on top of it.

\subsection{QAOA}

Besides the classical algorithms for Max-Cut problem, some QAOA-based quantum algorithms have been introduced recently. QAOA is a quantum-classical hybrid method for combinatorial optimization, solving problems that look as follows \cite{farhi_quantum_2014}:
\begin{center}
  Given $C:\zo{n}\mapsto \mathbb{R}$, find $\displaystyle \max_{s\in\{0,1\}^n} C(s)$.
\end{center}
This cost function $C$ can be encoded into a problem Hamiltonian \cmt{\tian{the scope of $s$}}
\begin{equation}
  H_C = \sum_{s\in \zo{n}} C(s)\ket{s}\bra{s},
\end{equation}
and in QAOA, the combinatorial optimization problems are reformulated into approximately finding the \cmt{\tian}{ground state and} ground energy of $-H_C$, which is done with a hybrid optimization procedure. Firstly, prepare the parameterized unitary transformation, which can generate the following state when operating on $\ket{+}^{\otimes n}$, i.e.,
\begin{equation}
  \ket{\bm{\beta},\bm{\gamma}}= \left(\prod_{k=1}^p \exp(-\mi \beta_{k} H_B) \exp(-\mi \gamma_{k} H_C) \right)\ket{+}^{\otimes n},
  \label{QAOACIR}
\end{equation}
with $H_B=\sum_{i=1}^n X_i$, $p$ being the number of layers for QAOA and $\bm{\beta, \gamma}\in [0,2\pi)^p$ being parameter vectors.
Secondly, update the parameters $\bm{\beta},\bm{\gamma}$ using a classical optimizer to maximize the expected measurement outcome,
\begin{equation}
  (\bm{\beta}^*,\bm{\gamma}^*)= \argmax\limits_{\bgb} \bra{\bm{\beta},\bm{\gamma}}H_C\ket{\bm{\beta},\bm{\gamma}}.
\end{equation}
Thus, the approximation ratio will be
\begin{equation}
  \frac {\bra{\bm{\beta}^*,\bm{\gamma}^*}H_C\ket{\bm{\beta}^*,\bm{\gamma}^*}}{\displaystyle \max_s C(s)}.
\end{equation}
\cmt{\tian}{The author of \cite{farhi_quantum_2014} have proved that the approximation ratio approaches $1$ when $p$ tends to infinity.}

Before explaining how to realize the above process in the language of quantum circuit, it is necessary to review some elementary quantum gates we will use in our paper, including 1-qubit gates $R_x$ (rotation along $X$-axis), $R_z$ (rotation along $Z$-axis), $H$ (Hadamard gate),
$$
  R_x(\theta)=\begin{pmatrix}
    \cos(\theta/2)     & -\mi\sin(\theta/2) \\
    -\mi\sin(\theta/2) & \cos(\theta/2)
  \end{pmatrix},
  R_z(\theta)=\begin{pmatrix}
    \me{-\mi(\theta/2)} & 0                  \\
    0                   & \me{\mi(\theta/2)}
  \end{pmatrix},
  H=\frac{1}{\sqrt{2}}\begin{pmatrix}
    1 & 1  \\
    1 & -1
  \end{pmatrix},
$$
(where $\theta\in\mathbb{R}$ is a parameter denoting rotation angle) and 2-qubit controlled-NOT gate along with its circuit notation
\begin{figure}[h]
  \centering
  \begin{subfigure}[b]{0.6\textwidth}
    $$
      \text{CNOT}=\begin{pmatrix}
        1 & 0 & 0 & 0 \\
        0 & 1 & 0 & 0 \\
        0 & 0 & 0 & 1 \\
        0 & 0 & 1 & 0 \\
      \end{pmatrix}
    $$
  \end{subfigure}
  \begin{subfigure}[b]{0.4\textwidth}
    \begin{quantikz}
      \lstick{$\ket{a}$}&\ctrl{1}&\rstick{$\ket{a}$}\qw\\
      \lstick{$\ket{b}$}&\targ{}&\rstick{$\ket{b \oplus a}$}\qw
    \end{quantikz},
  \end{subfigure}
\end{figure}
\\which means the target qubit (marked with $\bigoplus$) is flipped iff the control qubit (marked with $\bullet$) is in state $\ket{1}$.
\setcounter{figure}{0}

Now we show how to construct the unitaries above required by QAOA. The $\exp\left(-\mi\beta_k H_B\right)$ part is straightforward as it is simply a layer of $R_x(\beta_k)$ gates.
Now we consider $\exp\left(-\mi\gamma_k H_C\right)$. When solving Max-Cut problem with QAOA, we use an $s\in \zo{n}$ to encode the coloring of vertices, and $C(s)$ would be the number of edges whose endpoints are colored differently. And for such $C(s)$, $\exp(-\mi \gamma_{k} H_C)$ can be implemented by applying the circuit shown in \fref{UXY} for all edges $(x,y)$. Apparently, the circuit in \fref{UXY} add phase factors
depending on whether $x$-th and the $y$-th qubits are in different states, and the phase factors thus accumulate to the whole cut size (up to a global phase factor).

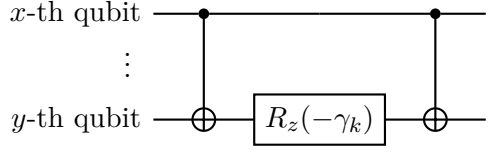
\begin{figure}[t]
  \centering
  \def\myldots{\ \ldots\ \qw}
  \def\myvdots{\ \vdots\ }
  \adjustbox{scale=1,center}{
    \begin{quantikz}
      \lstick{$x$-th qubit}&\ctrl{2}&\qw &\ctrl{2}& \qw\\
      \lstick{\myvdots} & \\
      \lstick{$y$-th qubit}&\targ{} &\gate{R_z(-\gamma_k )}&\targ{} & \qw
    \end{quantikz}
  }
  \caption{Adding a phase factor depending on whether two qubits $x,y$ shares the same value.}
  \label{UXY}
\end{figure}



As QAOA inherits the spirit of quantum adiabatic evolution, the approximation ratio gradually increases to $1$ when the hyperparameter $p$ increases to infinity. But with moderate circuit depth, QAOA hasn't been proved theoretically to achieve a better approximation ratio than best-known classical algorithms. However, in numerical experiments, QAOA does seem to show nice performance in shallow circuit depth. For example, Crooks \cite{crooks_performance_2018} shows for \ER graphs with up to $17$ vertices, QAOA significantly outperforms \GW algorithm when the hyperparameter $p$ grows to $8$, and the approximation ratio is very close to $1$ when $p$ is set $32$. Therefore, it is promising to achieve quantum advantage with QAOA, namely, obtain better cut solutions on certain problem instances than efficient classical algorithms. 

In the following sections, we describe our coupling framework which embeds classical approximation algorithm with QAOA which will be able to use fewer qubits to solve Max-Cut problem. We demonstrate its performance theoretically and numerically. Due to the good numerical performance of QAOA, our framework indeed gives good results in numerical experiments and outperforms currently the best-known classical algorithm.




\section{Theoretical results}
\label{sec:circdes}

Despite the proof of quantum advantage for the original QAOA referred in \cite{farhi_quantum_2014} being hard, for a new coupling framework below we are able to give some theoretical results assuming the approximation ratio of QAOA is known.
Firstly we give a reformulation of Max-Cut that leads to our new framework. Then we present details of our framework and give some performance analysis.

\subsection{Framework description} 

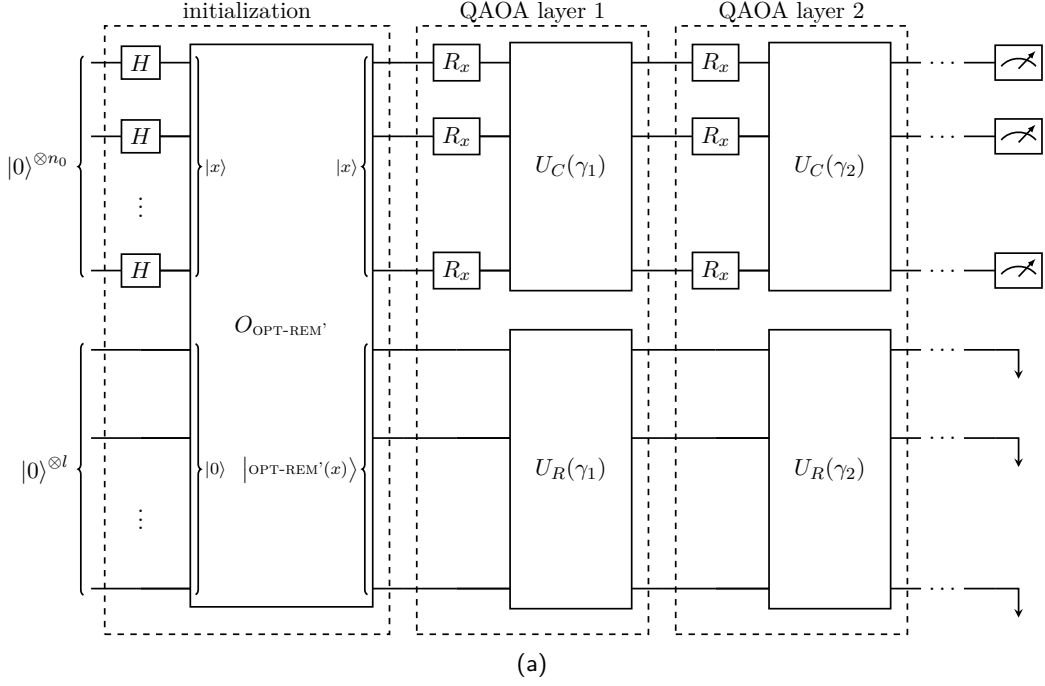
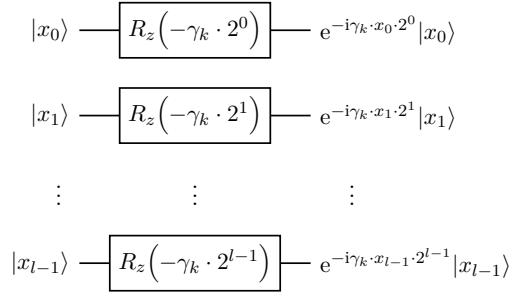
\begin{figure}[!ht]
  \begin{subfigure}[b]{\textwidth}
    \centering
    \def\qwb{\qwbundle[alternate]{}}
    \def\myldots{\ \ldots\ \qw}
    \def\myvdots{\ \vdots\ }
    \adjustbox{scale=0.8,center}{
      \begin{quantikz}
        \lstick[wires=4]{$\ket{0}^{\otimes n_0}$}& \gate{H}\gategroup[wires=8,steps=2,style=dashed]{initialization}& \gate[8,nwires={3,7}][3cm]{O_{\sbest}}\gateinput[4]{$\ket{x}$}\gateoutput[wires=4]{$\ket{x}$}& \qw& \gate{R_x}\gategroup[wires=8,steps=2,style=dashed]{QAOA layer 1}& \gate[4,nwires={3}][2cm]{U_C(\gamma_1)}& \qw& \gate{R_x}\gategroup[wires=8,steps=2,style=dashed]{QAOA layer 2}& \gate[4,nwires={3}][2cm]{U_C(\gamma_2)}& \myldots& \meter{}\\
        \lstick{}& \gate{H}& \qw& \qw& \gate{R_x}& \qw& \qw& \gate{R_x}& \qw& \myldots& \meter{}\\
        \lstick{}& \myvdots& & & & & & & & & & \\
        \lstick{}& \gate{H}& \qw& \qw& \gate{R_x}& \qw& \qw& \gate{R_x}& \qw& \myldots& \meter{}\\
        \lstick[wires=4]{$\ket{0}^{\otimes l}$}& \qw& \qw\gateinput[4]{$\ket{0}$}\gateoutput[wires=4]{$\ket{\sbest(x)}$}& \qw& \qw& \gate[4,nwires={3}][2cm]{U_R(\gamma_1)}& \qw& \qw& \gate[4,nwires={3}][2cm]{U_R(\gamma_2)}& \myldots& \trash{}\\
        \lstick{}& \qw& \qw& \qw& \qw& \qw& \qw& \qw& \qw& \myldots& \trash{}\\
        \lstick{}& \myvdots& & & & & & & & & & \\
        \lstick{}& \qw& \qw& \qw& \qw& \qw& \qw& \qw& \qw& \myldots& \trash{}
      \end{quantikz}
    }
    \caption{}
    \label{ORCCIR}
    \vspace{1em}
  \end{subfigure}

  \begin{subfigure}[b]{\textwidth}
    \centering
    \def\qwb{\qwbundle[alternate]{}}
    \def\myldots{\ \ldots\ \qw}
    \def\myvdots{\ \vdots\ }
    \adjustbox{scale=0.8,center}{
      \begin{tikzcd}
        \lstick{$\ket{x_{0}}$} & \gate{R_z\left(-\gamma_k\cdot 2^0\right)}&\rstick{$\me{-\mi\gamma_k\cdot x_{0}\cdot 2^{0}}\ket{x_{0}}$}\qw \\
        \lstick{$\ket{x_{1}}$} & \gate{R_z\left(-\gamma_k\cdot 2^1\right)}&\rstick{$\me{-\mi\gamma_k\cdot x_{1}\cdot 2^{1}}\ket{x_{1}}$}\qw \\
        \lstick{\myvdots} & \myvdots && \myvdots\\
        \lstick{$\ket{x_{l-1}}$} & \gate{R_z\left(-\gamma_k\cdot 2^{l-1}\right)}&\rstick{$\me{-\mi\gamma_k\cdot x_{l-1}\cdot 2^{l-1}}\ket{x_{l-1}}$}\qw \\
      \end{tikzcd}
    }
    \caption{}
    \label{TOPHASE}
  \end{subfigure}
  \caption{(a) The circuit implementation under general settings, assuming an access to an oracle $O_{\sbest}$. (b) The implementation detail for $U_R$ gate in (a).}
\end{figure}

When solving the Max-Cut problem on a graph, clearly a denser subgraph tends to have a larger impact on the overall solution. Suppose we select a dense subgraph $G_0=(V_0,E_0)$, and define $V_1=V-V_0, G_1=(V_1,E_1), n_0=|V_0|, n_1=|V_1|$.


Once we fix a coloring $s_0\in \zo{n_0}$ of $V_0$, the cut size in $E_0$ is fixed, and we denote it with $\inner{0}{s_0}$. Now, choosing different coloring $s_1\in \zo{n_1}$ of $V_1$ will result different cut sizes in the remaining edges of the graph (or formally, $E-E_0$), defined as
\begin{equation}
  \cxiss =\cross(s_0,s_1)+\inner{1}{s_1},
\end{equation}
where $\cross(s_0,s_1)$ stands for the number of cut edges in $E-E_0-E_1$, and $\inner{1}{s_1}$ denotes the number of cut edges in $E_1$. For each $s_0$, there will be an optimal $s_1$ that maximizes the size of cut in the remaining edges of the graph, and we define this maximum as
\begin{equation}
  \best(s_0)    = \max_{s_1 \in \{0,1\}^{n_1}}\cxiss. \label{OMGDEF}
\end{equation}
With the definitions above, Max-Cut problem is reformulated into
\begin{equation}
  \mxct(G)=\max_{s_0\in\{0,1\}^{n_0}}\left\{\inner{0}{s_0}+\best(s_0)\right\}.
  \label{REFCUT}
\end{equation}
Next, we will use QAOA to solve the optimization problem \eqref{REFCUT}, with the help of a classical algorithm that approximately computes $\best(s_0)$.

Remember in QAOA, to optimize $\max_{s\in \zo{n}} C(s)$, we need to construct the problem unitary, which implements
\begin{equation}
  \ket{s}\mapsto \me{-\mi\gamma_{k}\cdot C(s)}\ket{s}.
\end{equation}
Accordingly, here, a problem unitary for \eqref{REFCUT} would have to contain \cmt{\tian{in???}}two steps:
\begin{align}
  \ket{s_0}\xmapsto{\text{step 1}} & \me{-\mi\gamma_{k}\cdot \inner{0}{s_0}}\ket{s_0}                                            \\
  \xmapsto{\text{step 2}}          & \me{-\mi\gamma_{k}\cdot  \best(s_0)}\cdot \me{-\mi\gamma_{k}\cdot \inner{0}{s_0}}\ket{s_0}.
  \label{SHFT}
\end{align}

\paragraph{Step 1}
The unitary $\ket{s_0}\mapsto \me{-\mi\gamma_{k}\cdot \inner{0}{s_0}}\ket{s_0}$ is the same as that in the original QAOA circuit for Max-Cut , and can be implemented by applying the circuit \fref{UXY} for all $(x,y) \in E_0$.

\paragraph{Step 2}
First let us consider the general case, where we will implement the classical algorithm as an oracle shown below (more discussions on this classical oracle is given in \sref{sec:simu}),
\begin{equation}
  O_{\sbest}:\ket{s_0}\ket{0}^{\otimes l}\mapsto \ket{s_0}\ket{\sbest(s_0)}.
  \label{CLSORAC}
\end{equation}
Here, we need to store the largest possible $\sbest(s_0)$ on the second $l$-qubit register, thus $l$ should be set $\left\lceil\log \left(|E|-|E_0|\right)\right\rceil \in O\left(\log n\right)$.

With the oracle defined in \eqref{CLSORAC}, we can implement the required phase oracle
$$
  \ket{s_0}\ket{\sbest(s_0)}\mapsto \me{-\mi\gamma_{k}\cdot  \best(s_0)}\ket{s_0}\ket{\sbest(s_0)},
$$
using only rotation gates, as shown in \fref{TOPHASE}. And the overall circuit requires $n_0+\left\lceil\log \left(|E|-|E_0|\right)\right\rceil = n_0+O(\log n)$ qubits, shown in \fref{ORCCIR}.

\subsection{Performance analysis}
Next we will analyse the performance of the above framework in terms of approximation ratio and the size of the chosen subgraph (which determines the number of saved qubits).

Suppose the classical oracle achieves $\beta$-approximation ratio $(0<\beta \le 1)$, that is,
\begin{equation}
  \frac {\sbest(s)}{\best(s)}=\beta+\varepsilon_s,
\end{equation}
where $\varepsilon_s\ge 0$ is a slack variable. As the classical oracle results approximation loss, the QAOA procedure in our framework might not be able to discover the optimal solution, as it now tries to solve the following optimization problem:

\cmt{\tian{maybe in the form of lemma or proposition? and the following is the detailed proof.}}
\begin{equation}
  \label{newopt}\max_s\{a_s+(\beta+\varepsilon_s) b_s\},
\end{equation}
where $a_s=\inner{0}{s}, b_s=\best(s)$. Suppose the QAOA part solves \eqref{newopt} in $\alpha$-approximation ratio $(0 < \alpha \le 1)$ and produces expected value $\mathbb{E}(\text{QAOA})$, which means
\begin{equation}
  \frac {\mathbb{E}(\text{QAOA})}{\max_s\{a_s+(\beta+\varepsilon_s) b_s\}}\ge \alpha,
\end{equation}
then we have
\begin{lemma}[Approximation ratio lower bound]
  When we use coupling framework with $\alpha$-approx QAOA and $\beta$-approx classical algorithm, the approximation ratio for the overall method $\gamma$ satisfies
  \begin{equation}
    \gamma \ge \alpha\cdot\left(1+(\beta-1)\frac {\max_s\{b_s\}}{\max_s\{a_s+b_s\}}\right),
  \end{equation}
  \cmt\tian{where $a_s=\inner{0}{s}, b_s=\best(s)$.}
\end{lemma}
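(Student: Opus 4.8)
The plan is to chain the two approximation guarantees and then control the gap between the perturbed objective that QAOA actually optimizes and the true Max-Cut value. Writing $\gamma = \mathbb{E}(\text{QAOA})/\mxct(G)$ and recalling from \eqref{REFCUT} that $\mxct(G) = \max_s\{a_s+b_s\}$, the numerator is already bounded below by the assumed QAOA guarantee, $\mathbb{E}(\text{QAOA}) \ge \alpha \max_s\{a_s + (\beta+\varepsilon_s)b_s\}$. Since $\varepsilon_s \ge 0$ by definition and $b_s = \best(s) \ge 0$ (it is a count of cut edges), I would first discard the slack term to obtain the clean bound $\mathbb{E}(\text{QAOA}) \ge \alpha \max_s\{a_s + \beta b_s\}$. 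This reduces the whole problem to comparing $\max_s\{a_s+\beta b_s\}$ against $\max_s\{a_s+b_s\}$.

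The heart of the argument is this comparison, and the key move is to avoid re-optimizing. Let $s^\star$ denote a maximizer of the true objective $a_s+b_s$. Evaluating the perturbed objective at that same coloring gives $\max_s\{a_s+\beta b_s\} \ge a_{s^\star} + \beta b_{s^\star}$, and I would rewrite the right-hand side as $(a_{s^\star}+b_{s^\star}) - (1-\beta)b_{s^\star} = \max_s\{a_s+b_s\} - (1-\beta)b_{s^\star}$. Bounding $b_{s^\star} \le \max_s\{b_s\}$, which is legitimate because $1-\beta \ge 0$, then yields $\max_s\{a_s+\beta b_s\} \ge \max_s\{a_s+b_s\} - (1-\beta)\max_s\{b_s\}$. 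Dividing through by $\max_s\{a_s+b_s\}$ and folding the sign into $(\beta-1)$ produces exactly the bracketed factor in the statement, so that combining with the first step gives $\gamma \ge \alpha\bigl(1 + (\beta-1)\max_s\{b_s\}/\max_s\{a_s+b_s\}\bigr)$.

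I would flag the second step as the main obstacle. Because the classical oracle perturbs only the coefficient of $b_s$, the maximizer of $a_s+\beta b_s$ need not coincide with $s^\star$, so one cannot simply compare the two objectives term by term at a single coloring. The escape is that only a \emph{lower} bound on the perturbed maximum is needed, so substituting the suboptimal-but-convenient point $s^\star$ suffices; this converts an awkward two-sided ``which maximizer'' question into a one-sided inequality. Along the way I would keep the sign conventions explicit, noting that $0 < \beta \le 1$ forces $1-\beta \ge 0$ (licensing both the drop of $\varepsilon_s$ through $b_s \ge 0$ and the step $b_{s^\star}\le\max_s\{b_s\}$) and makes $\beta-1 \le 0$, so the bracketed factor is a penalty no larger than $1$, as one expects from an imperfect classical subroutine.
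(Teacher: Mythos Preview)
Your proof is correct and follows essentially the same approach as the paper. The paper writes the key comparison as the generic inequality $\max_s\{a_s+\beta b_s\}\ge \max_s\{a_s+b_s\}+\min_s\{(\beta-1)b_s\}$ and then flips the $\min$ to $(\beta-1)\max_s\{b_s\}$ because $\beta-1\le 0$; your choice of evaluating at the maximizer $s^\star$ of $a_s+b_s$ is precisely the proof of that inequality, so the two arguments coincide.
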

\begin{proof}
  The proof mainly consists of simple inequalities.
  \begin{align}
    \gamma= & \frac {\mathbb{E}(\text{QAOA})}{\max_s\{a_s+b_s\}}                                                                                           \\
    =       & \frac {\mathbb{E}(\text{QAOA})}{\max\{a_s+(\beta+\varepsilon_s) b_s\}}\cdot \frac {\max\{a_s+(\beta+\varepsilon_s) b_s\}}{\max_s\{a_s+b_s\}} \\
    \ge     & \alpha \cdot \frac {\max\{a_s+(\beta+\varepsilon_s) b_s\}}{\max_s\{a_s+b_s\}}
    \\
    \ge     & \alpha\cdot \frac {\max_s\{a_s+\beta b_s\}}{\max_s\{a_s+b_s\}}                                                                               \\
    \ge     & \alpha\cdot\frac {\max_s\{a_s+b_s\}+\min_s \{(\beta-1)b_s\}}{\max_s\{a_s+b_s\}} \label{MINSTEP}                                              \\
    =       & \alpha\cdot\left(1+(\beta-1)\frac {\max_s\{b_s\}}{\max_s\{a_s+b_s\}}\right), \label{GAMMAB}
  \end{align}
  where \eqref{MINSTEP} is due to $\beta-1$ being negative so the $\max$ operation flips to $\min$.
\end{proof}

As the framework embeds a $\beta$-approximation algorithm as a subprogram of an $\alpha$-approximation algorithm, the overall approximation ratio would be lower than $\alpha$ (otherwise we magically save the number of qubits with no cost). But as a basic requirement, we need to ensure the overall framework achieves better performance than the classical algorithm (otherwise we can just use the classical algorithm to solve Max-Cut and dump the quantum part). The following lemma gives a sufficient condition to ensure that requirement, namely, $\gamma>\beta$.

\begin{lemma}[Sufficient condition for quantum advantage]
  \label{LEMMA2}
  To have $\gamma>\beta$, it suffices to ensure that $n_0$ is large enough, namely
  \begin{equation}
    \frac {{n_0 (n_0-1)}}{{n (n-1)}}> 1- \frac 12\left(\frac 1{\alpha}+\frac 1{1-\beta}-\frac {1}{\alpha(1-\beta)}\right).
    \label{FINALBOUND}
  \end{equation}
\end{lemma}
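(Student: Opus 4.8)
The plan is to chain the lower bound already proved in the previous lemma with an elementary combinatorial estimate and then rearrange a single inequality. Writing $R := \max_s\{b_s\}/\max_s\{a_s+b_s\}$, the previous lemma gives $\gamma \ge \alpha\bigl(1+(\beta-1)R\bigr)$. Since this is a \emph{lower} bound on $\gamma$, it suffices to force its right-hand side above $\beta$. So first I would solve $\alpha\bigl(1+(\beta-1)R\bigr) > \beta$ for $R$. The only thing to be careful about here is that $\beta-1<0$, so when I divide by $\alpha(\beta-1)$ the inequality flips, yielding $R < (\alpha-\beta)/\bigl(\alpha(1-\beta)\bigr)$; I would also note in passing that the bound is vacuous unless $\alpha>\beta$, which is natural (the quantum part must beat the classical algorithm for advantage to be possible at all).

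Next I would bound $R$ from above purely by counting edges. The numerator $\max_s\{b_s\}=\max_{s_0}\best(s_0)$ is a cut using only edges outside $E_0$, hence at most $|E|-|E_0|$. The denominator is exactly $\mxct(G)=\max_s\{a_s+b_s\}$, which is at least $|E|/2$ because a uniformly random coloring cuts half the edges in expectation (\aref{MC2APPROX}). These two one-line estimates combine to $R \le 2\bigl(1-|E_0|/|E|\bigr)$. To convert this into a statement about vertex counts I would invoke the uniform edge-density of the graphs under consideration, so that $|E_0|/|E| = \binom{n_0}{2}/\binom{n}{2} = n_0(n_0-1)/\bigl(n(n-1)\bigr)$, giving $R \le 2\bigl(1-n_0(n_0-1)/(n(n-1))\bigr)$.

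Finally I would require this upper bound on $R$ to sit strictly below the threshold from the first step, i.e. impose $2\bigl(1-\tfrac{n_0(n_0-1)}{n(n-1)}\bigr) < (\alpha-\beta)/\bigl(\alpha(1-\beta)\bigr)$, and solve for the density ratio to get $\tfrac{n_0(n_0-1)}{n(n-1)} > 1 - \tfrac12\,(\alpha-\beta)/\bigl(\alpha(1-\beta)\bigr)$. The one cosmetic step left is the identity $(\alpha-\beta)/\bigl(\alpha(1-\beta)\bigr) = \tfrac1\alpha+\tfrac1{1-\beta}-\tfrac1{\alpha(1-\beta)}$, which I would verify by placing the three terms over the common denominator $\alpha(1-\beta)$ and checking the numerator collapses to $\alpha-\beta$; substituting it reproduces exactly \eqref{FINALBOUND}.

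I expect the only genuine obstacle to be controlling $R$: the trivial numerator bound $\le |E|-|E_0|$ together with the crucial denominator bound $\mxct(G)\ge|E|/2$ are precisely what make the ratio tractable, and the passage from the edge ratio $|E_0|/|E|$ to $n_0(n_0-1)/\bigl(n(n-1)\bigr)$ is where the uniform-density (or complete-subgraph) assumption enters and must be stated explicitly rather than glossed over. Everything after that is sign-careful but routine algebra.
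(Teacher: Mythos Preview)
Your argument tracks the paper's proof almost exactly: lower-bound $\gamma$ via the previous lemma, solve for the threshold on $R=\max_s\{b_s\}/\max_s\{a_s+b_s\}$, bound $R$ by $2(1-|E_0|/|E|)$ using $\max_s\{b_s\}\le |E-E_0|$ and $\mxct(G)\ge |E|/2$, then pass to the vertex ratio and rearrange. The algebra and the cosmetic identity are fine.

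The one substantive deviation is your handling of the step $|E_0|/|E|\to n_0(n_0-1)/\bigl(n(n-1)\bigr)$. You treat this as an \emph{equality} requiring a uniform-density or complete-graph assumption, and flag it as something that ``must be stated explicitly.'' The paper does not need any such assumption. Instead it proves (\prref{PROP1}) that a uniformly random $n_0$-vertex subgraph has $\mathbb{E}(|E_0|)=|E|\cdot n_0(n_0-1)/\bigl(n(n-1)\bigr)$, and then uses the averaging/pigeonhole principle: some particular subgraph must attain at least the expectation, so one \emph{chooses} $G_0$ to be such a dense subgraph, yielding the inequality $|E_0|/|E|\ge n_0(n_0-1)/\bigl(n(n-1)\bigr)$. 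Since you only need an upper bound on $R$, an inequality here is exactly what is required, and the argument then goes through for arbitrary graphs. So your proof is correct under your added hypothesis, but that hypothesis is removable; replacing it with the averaging argument matches the paper and strengthens the result.
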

To prove the lemma above, we need the following proposition.
\begin{prop}[Number of edges in a random subgraph]
  \label{PROP1}

  For a graph $G=(V,E)$, when we select a random subgraph $G_0=(V_0,E_0)$ with $|V_0|=n_0$, we have
  \begin{equation}
    \mathbb{E}(|E_0|)=|E|\cdot \frac {{n_0\cdot (n_0-1)}}{{n\cdot (n-1)}}.
  \end{equation}
\end{prop}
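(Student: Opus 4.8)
The plan is to prove the identity by linearity of expectation, reducing the computation of $\mathbb{E}(|E_0|)$ to a single per-edge inclusion probability. Because the random subgraph is obtained by choosing a vertex set $V_0$ of fixed size $n_0$ uniformly among all $\binom{n}{n_0}$ subsets, $|E_0|$ is naturally expressible as a sum of indicator variables over the edges of $G$, which is exactly the form that linearity of expectation handles cleanly — and notably without requiring any independence between distinct edges.

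First I would introduce, for each edge $e=(u,v)\in E$, the indicator $X_e$ that equals $1$ when both endpoints $u$ and $v$ lie in $V_0$ and equals $0$ otherwise. Since an edge belongs to $E_0$ precisely when both of its endpoints are selected into $V_0$, we have $|E_0|=\sum_{e\in E} X_e$. Applying linearity of expectation then yields
\begin{equation}
  \mathbb{E}(|E_0|)=\sum_{e\in E}\mathbb{E}(X_e)=\sum_{e\in E}\Pr\left[u\in V_0 \wedge v\in V_0\right].
\end{equation}

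Second I would evaluate the single probability $\Pr[u\in V_0\wedge v\in V_0]$ for a fixed pair of distinct vertices. Two equivalent routes are available: either count directly as $\binom{n-2}{n_0-2}/\binom{n}{n_0}$, choosing the remaining $n_0-2$ members of $V_0$ among the other $n-2$ vertices, or argue sequentially that the first endpoint lands in $V_0$ with probability $n_0/n$ and, conditioned on that, the second lands in $V_0$ with probability $(n_0-1)/(n-1)$. Either way the probability simplifies to $\frac{n_0(n_0-1)}{n(n-1)}$, and crucially this value does not depend on which edge $e$ was chosen.

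Finally, since the per-edge probability is the same constant across all $|E|$ edges, the sum collapses to $|E|\cdot\frac{n_0(n_0-1)}{n(n-1)}$, which is the claimed identity. There is no substantive obstacle in this argument: the only points deserving a moment of care are confirming that the inclusion probability is genuinely uniform over edges — which follows from the exchangeability of the uniform choice of $V_0$ — and carrying out the combinatorial simplification of $\binom{n-2}{n_0-2}/\binom{n}{n_0}$ correctly. The fact that no independence between distinct edges is needed is precisely why linearity of expectation is the natural tool here.
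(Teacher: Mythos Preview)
Your proposal is correct and essentially the same as the paper's argument: the paper counts that each edge appears in $\binom{n-2}{n_0-2}$ of the $\binom{n}{n_0}$ subgraphs and divides, which is exactly your per-edge inclusion probability $\binom{n-2}{n_0-2}/\binom{n}{n_0}=\frac{n_0(n_0-1)}{n(n-1)}$ summed over $|E|$ edges. The only cosmetic difference is that you phrase it via indicator variables and linearity of expectation while the paper phrases it as a double count, but the computation is identical.
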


\begin{proof}
  There are $\binom{n}{n_0}$ subgraphs sized $n_0$.

  For each edge, it is included in a subgraph if and only if the subgraph contains its two endpoints, so it is included in $\binom{n-2}{n_0-2}$ subgraphs. So the total number of edges in the $\binom{n}{n_0}$ subgraphs mentioned above is
  \begin{equation}
    |E|\binom{n-2}{n_0-2},
  \end{equation}
  giving
  \begin{equation}
    \mathbb{E}(|E_0|)=\frac {|E|\binom{n-2}{n_0-2}}{\binom{n}{n_0}}=|E|\cdot \frac {n_0(n_0-1)}{n(n-1)}.
  \end{equation}
\end{proof}

Now we give the proof of \lmref{LEMMA2}.

\begin{proof}\textit{(of \lmref{LEMMA2})}
  To have the right hand side of \eqref{GAMMAB} no less than $\beta$, we need
  $$
    \frac {\max_s\{b_s\}}{\max_s\{a_s+b_s\}} < \frac {\alpha-\beta}{\alpha(1-\beta)}=\frac 1{\alpha}+\frac 1{1-\beta}-\frac {1}{\alpha(1-\beta)}.
  $$
  Meanwhile, we have
  \begin{align}
                & \max_s\{b_s\} \le  |E-E_0|,                                                   \\
                & \max_s\{a_s+b_s\}=\mxct(G)\ge \frac {|E|}2,                                   \\
    \Rightarrow & \frac {\max_s\{b_s\} }{\max_s\{a_s+b_s\}}\le \frac {|E-E_0|}{\frac {|E|}{2}},
  \end{align}
  so to ensure $\gamma>\beta$, it suffices to satisfy
  \begin{equation}
    \begin{aligned}
       & \frac {|E-E_0|}{|E|} < \frac 12\left(\frac 1{\alpha}+\frac 1{1-\beta}-\frac {1}{\alpha(1-\beta)}\right), \\
    \end{aligned}\label{RATINEQ}
  \end{equation}
  that is,
  \begin{equation}
    \frac {|E_0|}{|E|} >1- \frac 12\left(\frac 1{\alpha}+\frac 1{1-\beta}-\frac {1}{\alpha(1-\beta)}\right).
  \end{equation}


  From \prref{PROP1}, we can obtain a lower bound for $\sfrac {|E_0|}{|E|}$ that is related to $n_0$, because since we have the expected number of edges for a random subgraph with $n_0$ vertices, we can assert there are comparatively denser subgraphs with at least $\mathbb{E}(|E_0|)$ edges. We can find such a subgraph easily
  , then we have
  \begin{equation}
    |E_0|\ge |E|\cdot \frac {n_0(n_0-1)}{n(n-1)} \Rightarrow \frac {|E_0|}{|E|}\ge \frac {n_0(n_0-1)}{n(n-1)}.
    \label{EBAVG}
  \end{equation}
  Thus, to \textbf{keep $\gamma > \beta$ to remain advantage over the classical algorithm}, it suffices to ensure
  \begin{equation}
    \frac {{n_0 (n_0-1)}}{{n (n-1)}}> 1- \frac 12\left(\frac 1{\alpha}+\frac 1{1-\beta}-\frac {1}{\alpha(1-\beta)}\right).
  \end{equation}
\end{proof}

\begin{table}[ht]
  \centering

  \begin{subtable}[h]{\textwidth}
    \footnotesize
    \centering
    \begin{tabular}{c|lllllllll}
      \toprule
      \diagbox[]{$\alpha$}{$\beta$} & $0.85$ & $0.87$ & $0.89$ & $0.91$ & $0.93$ & $0.95$ & $0.97$ & $0.99$ \\ \midrule
      $0.85$                        & $1.0$                                                                 \\
      $0.87$                        & $0.92$ & $1.0$                                                        \\
      $0.89$                        & $0.85$ & $0.91$ & $1.0$                                               \\
      $0.91$                        & $0.78$ & $0.83$ & $0.9$  & $1.0$                                      \\
      $0.93$                        & $0.71$ & $0.75$ & $0.8$  & $0.88$ & $1.0$                             \\
      $0.95$                        & $0.65$ & $0.68$ & $0.71$ & $0.77$ & $0.85$ & $1.0$                    \\
      $0.97$                        & $0.59$ & $0.6$  & $0.63$ & $0.66$ & $0.71$ & $0.79$ & $1.0$           \\
      $0.99$                        & $0.53$ & $0.53$ & $0.54$ & $0.55$ & $0.57$ & $0.6$  & $0.66$ & $1.0$  \\ \bottomrule
    \end{tabular}
    \caption{}
    \label{VTABLE}
    \vspace{1em}
  \end{subtable}
  \begin{subtable}[h]{\textwidth}
    \footnotesize
    \centering
    \begin{tabular}{l|r|rr|rr}
      \toprule
      Approx      &       & \multicolumn{2}{l|}{$\alpha=0.97, \beta=0.91$} & \multicolumn{2}{l}{$\alpha=0.95, \beta=0.89$}                      \\ \midrule
      Device      & $n_0$ & max $n$                                        & $\Delta$                                      & max $n$ & $\Delta$ \\  \midrule
      Tianmu-1    & 36    & 44                                             & 8                                             & 42      & 6        \\
      Maryland    & 40    & 49                                             & 9                                             & 47      & 7        \\
      Sycamore    & 53    & 65                                             & 12                                            & 62      & 9        \\
      Zuchongzhi2 & 66    & 81                                             & 15                                            & 78      & 12       \\
      IonQ        & 79    & 97                                             & 18                                            & 93      & 14       \\
      Aspen-M     & 80    & 98                                             & 18                                            & 94      & 14       \\
      Eagle       & 127   & 156                                            & 29                                            & 150     & 23       \\ \bottomrule
    \end{tabular}
    \caption{}
    \label{DTABLE}
  \end{subtable}
  \caption{(a) The value of $1- \frac 12\left(\frac 1{\alpha}+\frac 1{1-\beta}-\frac {1}{\alpha(1-\beta)}\right)$. (b) \cmt{Number of saved qubits if applied to several quantum computers .}The largest problem instance that can be handled if applied to several quantum computers. In this table, $n_0$ column stands for the actual number of available qubits, $\max n$ stands for the maximum problem size that can be solved with this device using our framework ensuring $\gamma > \beta$, and $\Delta$ indicates the difference between $n$ and $n_0$. }
  \label{VDTB}
\end{table}

We observe  the right-hand side of \eqref{FINALBOUND} is considerably small, see \tref{VTABLE}. For example, if the classical approximation ratio $\beta=0.89$ and the QAOA approximation ratio $\alpha=0.95$, then $1- \frac 12\left(\frac 1{\alpha}+\frac 1{1-\beta}-\frac {1}{\alpha(1-\beta)}\right)=0.71$ and ${n_0}/{n}$ approaches $85\%$ as $n$ scales up, which means we can save up $15\%$ qubits.  
We also list a table that shows the corresponding largest problem instance that can be solved on current quantum computers.

\paragraph{The trivial case: $\beta=\frac 12$} There is a trivial approach that takes the spirit of the naive algorithm mentioned in \sref{naivealgo}, which achieves $\frac 12$-approximation on $\best$ problem. When a color of $V_0$ is fixed to $s_0$, we can just uniformly and independently color each vertex in $V_1$. Now we prove that this gives $\frac 12$-approximation ratio on \best  \ problem.
\begin{proof}
  For each edge $(x_0,y_1) \in E-E_0-E_1$, it connects two vertices from two partitions $x_0\in V_0, y_1\in V_1$. The $x_0$ is already colored, and there will be $\frac 12$ probability that $y_1$ is colored differently from $x_0$, making $(x_0,y_1)$ becomes a cut with $\frac 12$ probability.

  For each edge $(x_1,y_1) \in E_1$, this edge have $4$ different colorings, and in $2$ out of them $x_1$ and $y_1$ have different colors. Thus $(x_1,y_1)$ becomes a cut with $\frac 12$ probability.

  The expected cut size given by this method is then
  \begin{equation}
    \frac 12|E-E_0|     \ge \frac 12 \best(s_0), \quad \forall s_0.
  \end{equation}

\end{proof}
This $\frac 12$ approximation is low, thus having the overall algorithm do better than $\beta$, as mentioned in \lmref{LEMMA2}, is not a decent goal. So we revisit the approximation ratio $\gamma$ given by the overall algorithm, as indicated by \eqref{GAMMAB} and take $\beta=\frac 12$,
\begin{align}
  \gamma \ge & \alpha\cdot\left(1+(\beta-1)\frac {\max_s\{b_s\}}{\max_s\{a_s+b_s\}}\right)     \\
  \ge        & \alpha\cdot\left(1-\frac {\max_s\{b_s\}}{2\cdot \max_s\{a_s+b_s\}}\right)       \\
  \ge        & \alpha\cdot\left(1-\frac {|E-E_0|}{|E|}\right) = \alpha\cdot \frac {|E_0|}{|E|} \\
  \ge        & \alpha\cdot \frac {n_0(n_0-1)}{n(n-1)},
\end{align}
giving us a relationship between $\gamma$, $\alpha$ and the ratio $\left(n_0(n_0-1)\right)/\left(n(n-1)\right)$. We see that if we want an overall $\gamma_0$ approximation, and given the approximation ratio for QAOA being $\alpha$, it is required to have
\begin{equation}
  \frac {n_0(n_0-1)}{n(n-1)} \ge \frac {\gamma_0}{\alpha}.
\end{equation}

\begin{table}[t]
  \footnotesize
  \centering
  \begin{tabular}{l|l|ll}
    \toprule
    Device                     & $n_0$ & max $n$ & $\Delta$ \\
    \midrule
    Tianmu-1                   & 36    & 37      & 1        \\
    Maryland                   & 40    & 42      & 2        \\
    Sycamore                   & 53    & 55      & 2        \\
    Zuchongzhi2                & 66    & 69      & 3        \\
    IonQ                       & 79    & 83      & 4        \\
    Aspen-M                    & 80    & 84      & 4        \\
    Eagle                      & 127   & 133     & 6        \\
    \textit{delusional device} & 1024  & 1079    & 55       \\ \bottomrule
  \end{tabular}
  \caption{The largest problem instance that can be handled if applied to several quantum computers  when $\alpha=1,\beta=\frac 12$  and $0.9$-approximation is required.}
  \label{AG09}
\end{table}

This is in fact a strict constraint, which is understandable considering our trivial solution to $\best$. To see this, we set $\alpha$ to $1$ which means we assume QAOA obtains accurate solution, then we need $\left(n_0(n_0-1)\right)/\left(n(n-1)\right)\ge \gamma_0$. Even if the goal is just to achieve $0.9$-approximation on Max-Cut (which is not high considering classical computation achieves $0.878$), only a few qubits can be saved, see \tref{AG09}. However, it is worth mentioning that the larger the device is, the more effective this method will be, which can be seen from the last row of the table. So although this trivial $\beta=\frac 12$ case is not practically meaningful at present, it may become useful in the future as the quantum computers scale up.


\section{Experimental results} 
\label{sec:simu}

\begin{figure}[h]
  \centering
  \includegraphics[width=\textwidth]{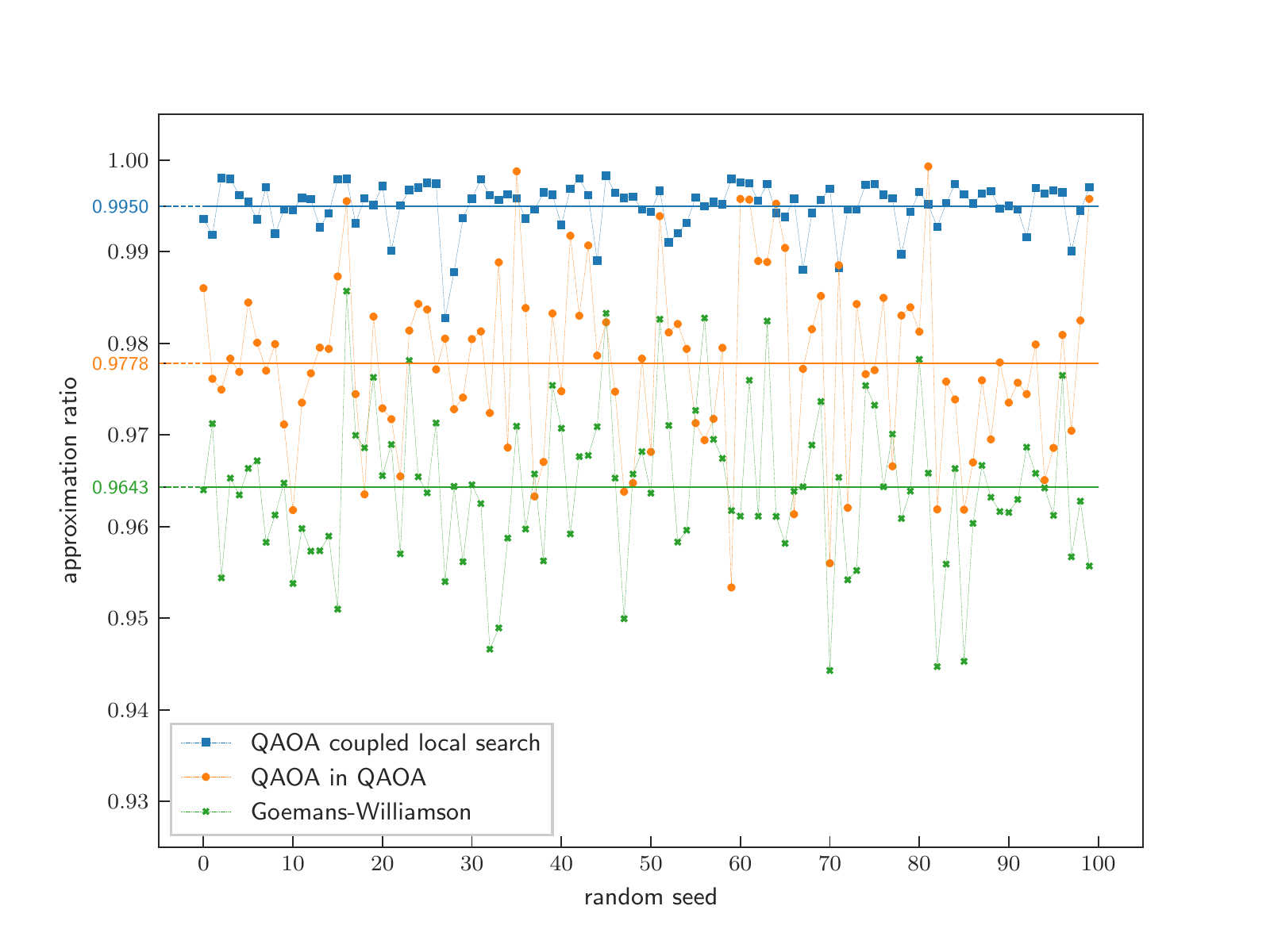}
  \caption{We generate $24$-vertex \ER graphs with $0.8$ connectivity and use $18$-qubit QAOA to solve Max-Cut on them. Blue line (\texttt{QAOA coupled local search}) shows the approximation ratio given by our method, orange line (\texttt{QAOA in QAOA}) by the method in \cite{zhou_qaoa--qaoa_2023} and green line (\texttt{\GW}) by \GW algorithm. We can see from the plot that on most instances our method gives better results. In fact, our method outperforms the other two in $96$ out of the total $100$ instances. The three horizontal lines shows the average approximation ratio given by three methods, and our result is clearly better.}
  \label{SIMRES}
\end{figure}

\cmt{To be modified \tian{In this section, we will discuss the construction of the oracle shown in \eqref{CLSORAC}.} This section composes of two parts. As the previous section assumes the access to an oracle \eqref{CLSORAC} based on classical algorithm, we now discuss how such an oracle might be built. We emphasize that we do not have a full solution \tian{another way???}, but we present a conjecture that leads our way of finding one. The conjecture is discussed and numerically supported in this section. For the second part, we assume the conjecture is true and performed some numerical experiments of our method. Results are compared with previous works and we can observe that our method achieves better approximation in experiments.
}

This section composes of two parts. As the previous section assumes the access to an oracle \eqref{CLSORAC} based on classical algorithm, we now discuss how such an oracle might be built. Building such a oracle proves to be hard as we only allow $o(n)$ space overhead (which is discussed in details in the following text), but we point out a possible direction which is supported by numerical results. For the second part, we propose a heuristic approach for the classical algorithm and perform some numerical experiments on $24$-vertex \ER graph using $18$ qubits. Our method is compared with \GW algorithm and previous work \cite{zhou_qaoa--qaoa_2023}, and the results are shown in \fref{SIMRES}. We can see that method with $18$ qubits is still able to outperform \GW algorithm by a lot, and the performance of the algorithm in \cite{zhou_qaoa--qaoa_2023} varies severely, but most of the time, the performance of our method is better than it. \cmt{\tian{some detailed examples}}

\subsection{On implementing the classical algorithm} 
\label{sec:scsc}

Review the definition of $\best$ (given in \eqref{OMGDEF})
$$
  \best(s_0)    = \max_{s_1 \in \{0,1\}^{n_1}}\cxiss.
$$
It is worth mentioning that the problem above is an extended version of Max-Cut. If we choose $V_0=\emptyset$, then $\cross(s_0,s_1)=0$ and $\best$ will degenerate to Max-Cut problem on $G_1$. Thus, $\best$ problem is as hard as Max-Cut problem, and in \hyperref[BESTGW]{Appendix C}, we demonstrate a polynomial space approximation algorithm that solves $\best$ problem with $.878$ approximation ratio, showing $\best$ and Max-Cut indeed have similar difficulty. The algorithm described in \hyperref[BESTGW]{Appendix C} naturally gives rise to a quantum circuit with $O(n)$ auxiliary qubits -- which is not tolerable as our goal is to save qubits. In fact, due to  \cite{kapralov_optimal_2019}, we know no classical algorithm can break the trivial $\frac 12$ approximation ratio for Max-Cut with $o(n)$ space. Thus, there is no suitable classical algorithm solving $\best$ that can be straightly implemented on a quantum circuit to achieve \eqref{CLSORAC}. However, we only care about the number of qubits needed in our framework, which means we can use arbitrary (polynomial) classical space to prepare for the quantum circuit, which circumvents the constraint from \cite{kapralov_optimal_2019}.


{
In solving $\best$, we can do brute force, i.e., enumerating all possible $s_1$ and take the maximum as defined in \eqref{OMGDEF}. The brute force method is inefficient, and we want to do pruning  to reduce the search space. It is guessed that some $s_1$ is unnecessary to be tried. For example, if $\inner{1}{s_1}$ is too small, then $s_1$ is unlikely to be selected as the optimal choice for any $s_0$. Due to the discussion above, we give the following definition:

\begin{npenv}[$\beta$-guarantee set]
  For arbitrary fixed approximation ratio requirement $\beta$, there is a set $T \subset \zo{n_1}$ such that 
  \begin{equation}
    \forall s_0 \in \zo{n_0}, \frac {\max\limits_{s_1\in T}\cxiss }{\best(s_0)}\ge \beta.
    \label{REQ}
  \end{equation}
  We call such set $T$ a $\beta$-guarantee set. 
\end{npenv}

We ask, what is the minimum number of  $s_1$ elements that a $\beta$-guarantee set should contain? This problem is important because a small $\beta$-guarantee set straightly leads to an efficient optimization workflow (described in \autoref{qaoacadsec}). In the following passage, we will try to find small $\beta$-guarantee sets for random graphs to see what it looks like, and numerical results seem to show that there exist considerably small $\beta$-guarantee sets. Now we describe our method.\cmt{Thus we further discuss \eqref{REQ} and finally lead to our method to verify the conjecture.\tian{the point of the discussion is to explain the proof is hard???}} Consider the process of constructing a $\beta$-guarantee set $T$. For each $s_1\in \zo{n_1}$, if it is added to a set $T$, it will be able to ensure that all elements in a certain set $\cover_{\beta}(s_1)\subset \zo{n_0}$ meet the approximation requirement, namely,
\begin{equation}
  \cover_{\beta}(s_1):=\left\{s_0\mdc \frac {\cxiss}{\best(s_0)}\ge \beta \right\}.
\end{equation}
Thus, our goal of constructing a $\beta$-guarantee set can be reformulated as the well-known the set-covering problem.

\begin{npenv}[Reformulation as set-covering]
  Given $2^{n_1}$ subsets $\{\cover_{\beta}(s_1)\}_{s_1\in \zo{n_1}}$ of $U=\zo{n_0}$, we'd like to construct a set $T \subset \zo{n_1}$ such that
  \begin{equation}
    \bigcup_{s_1\in T}\cover_{\beta}{(s_1)}=U.
  \end{equation}
\end{npenv}

Set-covering is known to be NP-hard \cite{korte_combinatorial_2012}. We make use of a greedy algorithm that achieves $\ln |U|$ approximation ratio, defined in \aref{LOGNCOVER}. This means if we apply it to our $\beta$-guarantee set searching problem, it will be able to output a set $\GS$ that guarantees
\begin{equation}
  \frac {|\GS|}{|\text{minimum } T|}\le \ln 2^{n_1}\in O(n_1).
\end{equation}


\begin{algorithm}
  \caption{$\ln |U|$-approximation set covering}
  \label{LOGNCOVER}
  \KwData {$m$ subsets $\mathcal{C}=\{S\mid S\subset U\}$ that guarantee $\bigcup_{S\in \mathcal{C}}S=U$.}
  \KwResult {Covering set $\mathcal{T}\subset \mathcal{C}$ that satisfies $\bigcup_{T\in \mathcal{T}}T=U$.}
  $\mathcal{T} \gets \emptyset$;

  \While{$\bigcup_{T\in \mathcal{T}}T\not=U$}{
    $S^*\gets \arg \max_{S\in \mathcal{C}}\#\left\{S-\bigcup_{T\in \mathcal{T}}T\right\}$;

    $\mathcal{T}\gets \mathcal{T}\cup \left\{S^*\right\}$\Comment*[r]{Choose the most profitable subset}
  }
  \Return $\mathcal{T}$;
\end{algorithm}

It is impossible to run the full algorithm for relatively large graphs since the time complexity is $O(|\GS|\cdot |\mathcal{C}|\cdot |U|)$, which becomes $O(|\GS|\cdot 2^n)$ in our case. As it is hard to run the algorithm for $U=\{0,1\}^{n_0}$, naturally we want to try the algorithm on a random subset $U_0\subset U$. In other words, instead of finding $\GS$ that satisfies \eqref{REQ}, we randomly sample a $U_0\subset \zo{n_0}$, and use \aref{LOGNCOVER} to calculate $\GS(U_0)$ that guarantees
\begin{equation}
  \forall s_0\in U_0,\quad \frac {\max\limits_{s_1\in \GS(U_0)}\cxiss }{\best(s_0)}\ge \beta.
\end{equation}
For a fixed graph $G$ and partition $(V_0,V_1)$, the more elements we add to $U_0$, the larger $\GS(U_0)$ given by \aref{LOGNCOVER} will be. Moreover, it is clear that for any graph, when $|U_0|=2^0$, $|\GS(U_0)|$ is trivially $1$. Thus, if we  figure out how $|\GS(U_0)|$ increases as $U_0$ grows, we can estimate how large $\GS$ is likely to be. And this is possible for numerical experiments with limited computational power.

In our numerical experiments, we generate $n$-vertex \ER graph with $0.8$ connectivity, and select a dense subgraph with $n_0$ vertices using the heuristic algorithm in \cite{noauthor_heuristic_1994}. Under such settings, we compute how $|\GS({U_0})|$ increases as $|U_0|$ gradually grows to $2^{n_0}$. We first generate a random permutation $p$ of $\{0,1,2,\dots, 2^{n_0}-1\}$, and for a random graph, we compute the $\GS({U_0})$ when $U_0=\{p_0\}, \{p_0,p_1\}, \{p_0,p_1,p_2,p_3\}, \dots ,$ $\{p_0,p_1,\dots, p_{2^{n_0}-1}\}$. The procedure above is done for $20$ random permutations and $20$ random graphs, and the results for $n=35, n_0=27$ are shown in \fref{P35ALL}. It can be observed that $|\GS({U_0})|$ seems to grow almost linearly with $\log |U_0|$, \cmt{\textcolor{blue}{need a linear growth as the comparison in the figure}}indicating the final $|\GS|$ will not be very large.

\begin{figure}[ht]
  \centering
  \begin{subfigure}[b]{0.8\textwidth}
    \centering\includegraphics[width=\textwidth]{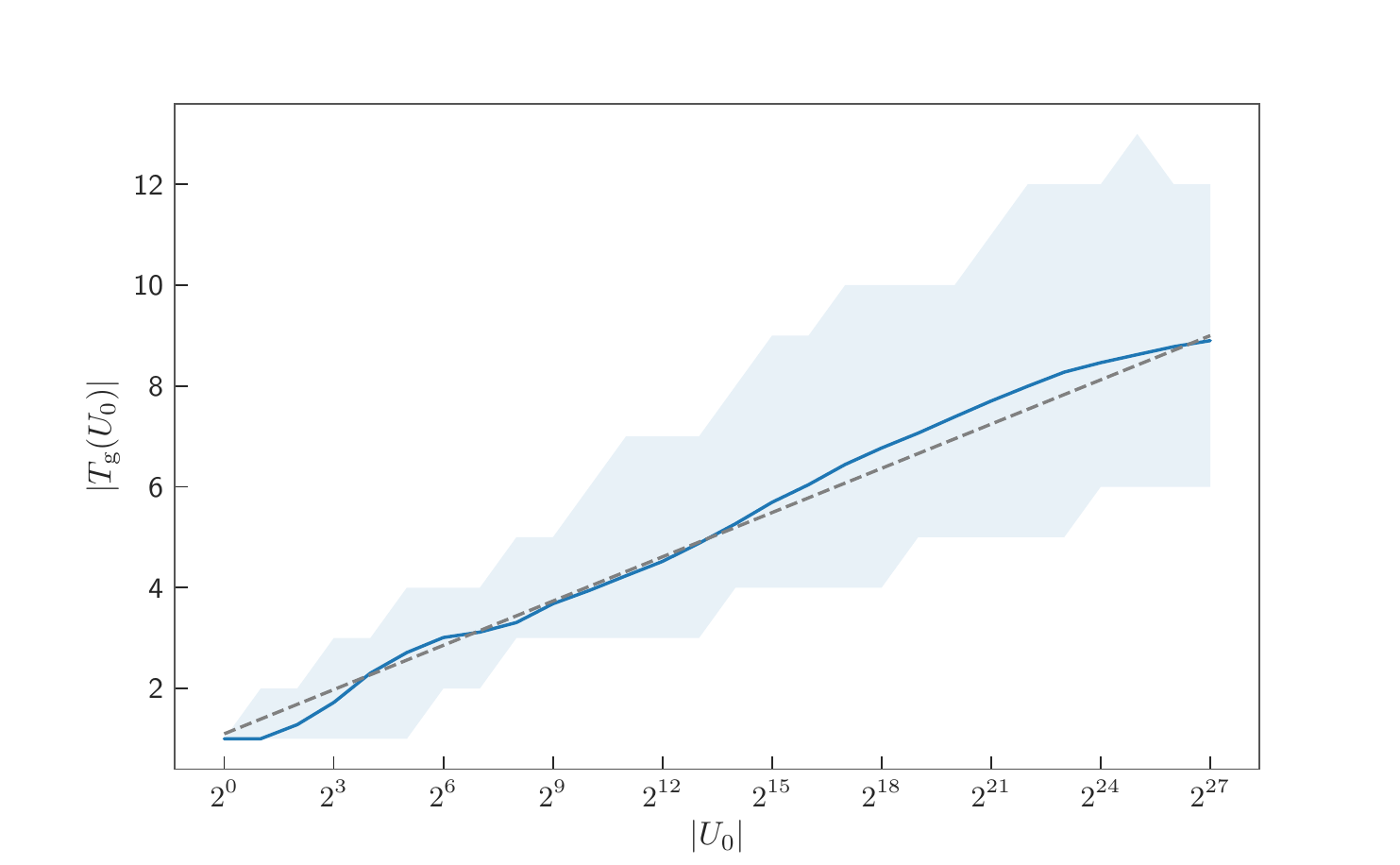}
    \caption{}
    \label{P35ALL}
  \end{subfigure}

  \begin{subfigure}[b]{\textwidth}
    \centering
    \includegraphics[width=0.4\textwidth]{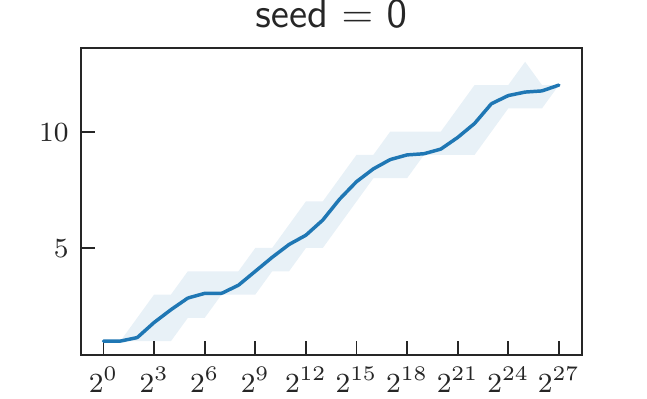}
    \includegraphics[width=0.4\textwidth]{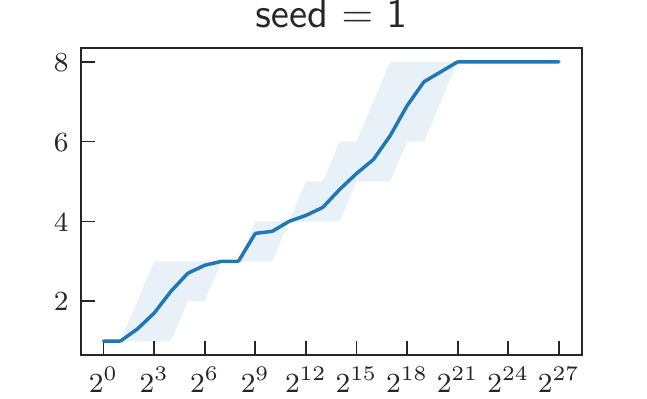}
    \caption{}
    \label{CONCENT}
  \end{subfigure}

  \caption{(a) The increment of $|\GS({U_0})|$ with $n=35,n_0=28,\beta=0.9$, averaged over $20$ random \ER graphs and $20$ random permutations. A straight line (gray) is added for reference. (b) The concentration of $|\GS({U_0})|$ for fixed graphs and $|U_0|$, respectively generated using graphs whose random seeds are shown in their titles. The solid line indicates the average size of $|\GS({U_0})|$, while the shaded area shows the minimum and the maximum of $|\GS({U_0})|$ over all samples and graphs.}
\end{figure}


Moreover, we also observe, that for a fixed graph, when we apply \aref{LOGNCOVER} to it with different $U_0$ of the same size, the sizes of $\GS({U_0})$ display signs of concentration, as seen in \fref{CONCENT}. This concentration suggests that when we add elements into $U_0$ one by one, $|\GS({U_0})|$ grows following an almost identical speed to a final $|\GS|$, regardless of the order of elements added.
Further experiments with different $n$ and $n_0$ are described in \hyperref[SUPEXSEC]{Appendix A}. 

Our exploration shows a tendency that $|\GS|$ seems to grow slowly and will not become very large. But currently we only have numerical results and cannot derive an asymptotic order for $|\GS|$, which, along with how to efficiently find a small $\beta$-guarantee set, remains for future work to tackle.
}

\subsubsection{QAOA circuit based on $\beta$-guarantee set}

\label{qaoacadsec}

\begin{figure}
  \begin{subfigure}[b]{\textwidth}
    \includegraphics[width=\textwidth]{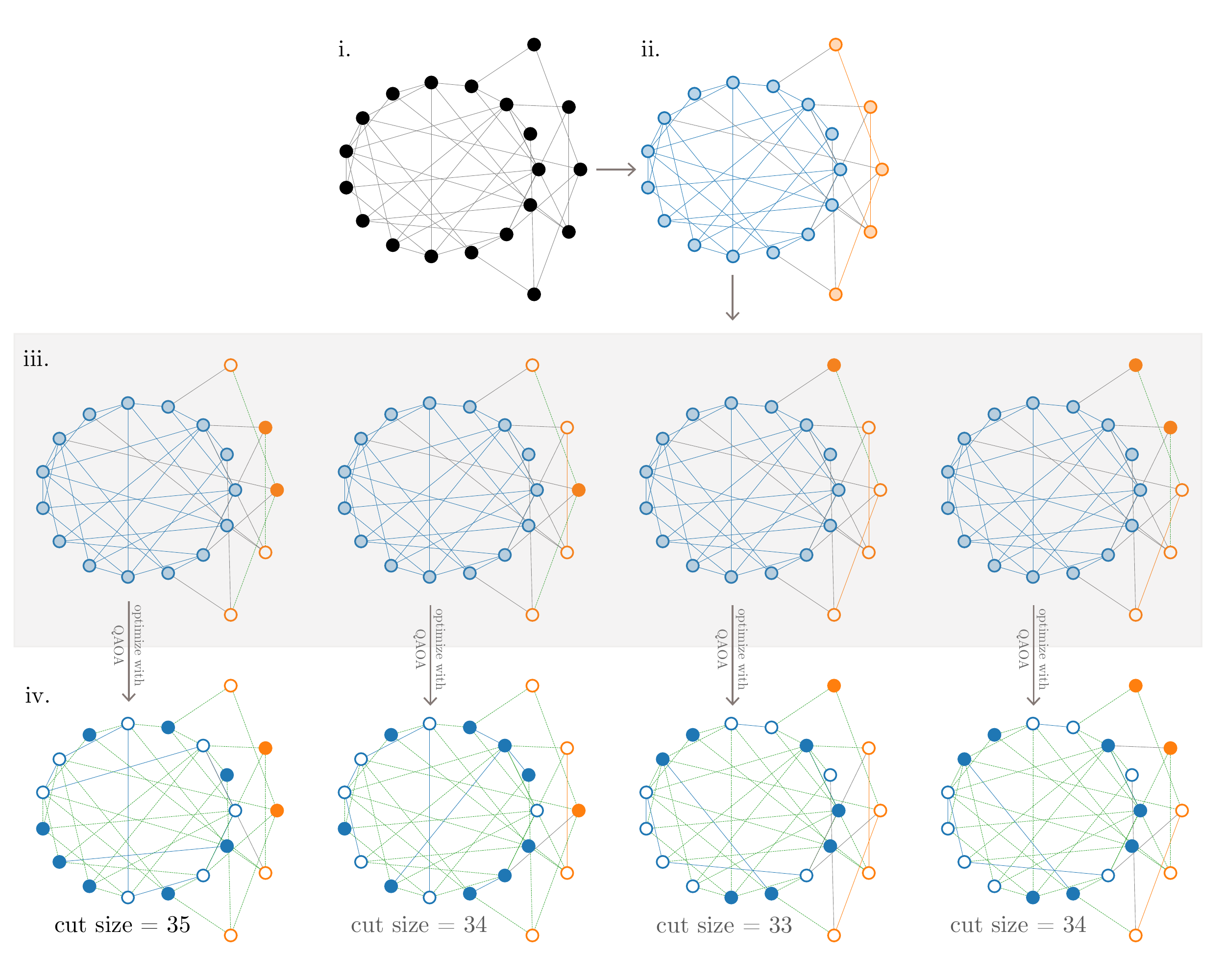}
    \caption{}
    \label{COMBFIG}
  \end{subfigure}
  \begin{subfigure}[b]{\textwidth}
    \centering
    \def\qwb{\qwbundle[alternate]{}}
    \def\myldots{\ \ldots\ \qw}
    \def\myvdots{\ \vdots\ }
    \adjustbox{scale=0.8,center}{
      \begin{quantikz}
        \lstick{} & \gate{R_z}\gategroup[wires=5,steps
          =1,style=dashed]{$\me{-\mi \gamma_k \cdot \cxiss }$} & \qw &\ctrl{2}\gategroup[wires=5,steps
          =7,style=dashed]{$\me{-\mi\gamma_k\cdot \inner{0}{s_0}}$} & \qw &\ctrl{2} & \qw & \qw & \qw &  \myldots & \rstick{}\qw \\
        \lstick{} & \gate{R_z} & \qw&\qw &\qw & \qw & \ctrl{3} & \qw & \ctrl{3} & \myldots & \rstick{}\qw  \\
        \lstick{} & \gate{R_z} & \qw &\targ{} & \gate{R_z} & \targ{} & \qw & \qw & \qw & \myldots & \rstick{}\qw \\
        \lstick{} & \myvdots \\
        \lstick{} & \gate{R_z} & \qw &\qw & \qw & \qw &\targ{} & \gate{R_z} &\targ{} & \myldots & \rstick{}\qw
      \end{quantikz}
    }
    \caption{}
    \label{TQAOASINGLECIRC}
  \end{subfigure}
  \begin{subfigure}[b]{\textwidth}
    \centering
    \adjustbox{scale=0.8,center}{
      \begin{tikzcd}
        \lstick{$x$-th qubit\quad $\ket{c}$} & \gate{R_z\left(-\mi \gamma_k\cdot  \left(\rhox{1}-\rhox{0}\right)\right)}
        &\rstick{$\underbrace{\me{\mi \gamma_k\cdot \left(\rhox{0}+\rhox{1}\right)/2}}_{\text{a global phase factor}}\cdot\me{-\mi \gamma_k\cdot \rhox{c}}\ket{c}$}\qw
      \end{tikzcd}
    }
    \caption{}
    \label{CXIRZ}
  \end{subfigure}
  \caption{(a) Overall workflow. From the original $20$-vertex graph shown in \textit{i}, we select a $15$-vertex dense graph, which is colored blue in \textit{ii}. There exists a $.878$-guarantee set with $4$ elements, and is represented with unfilled and filled orange vertices in \textit{iii}. Based on these $4$ elements, we can generate $4$ QAOA circuits that look like (b), and running optimization separately gives us $4$ possible solutions in \textit{iv}, from which we take the first solution, being the optimum, as our final solution. (b) The circuit for $\ket{s_0}\mapsto \me{-\mi\gamma_k\cdot \left(\inner{0}{s_0}+\cxiss \right)}\ket{s_0}$. (c) Adding a phase factor $\me{-\mi \gamma_k \cdot \rhox{c}}$ (up to a global phase factor). }
\end{figure}

With a $\beta$-guarantee set $T$, the approximate solution to $\best$ will hence be 
\begin{equation}
  \sbest(s_0)=\max_{s_1\in T}\cxiss .
\end{equation}
Assuming QAOA achieves approximation ratio $\alpha$, then we are promised to obtain a cut solution with size of at least
\begin{align}
    & \alpha\cdot \left(\max_{s_0\in\{0,1\}^{n_0}}\left\{\inner{0}{s_0}+\sbest(s_0)\right\}\right)                              \\
  = & \alpha\cdot \left(\max_{s_0\in\{0,1\}^{n_0}}\left\{\inner{0}{s_0}+\max_{s_1\in T}\cxiss \right\}\right)                   \\
  = & \max_{s_1 \in T}\left\{\alpha\cdot \max_{s_0\in\{0,1\}^{n_0}}\left\{\inner{0}{s_0}+\cxiss \right\}\right\}\label{RHSAPP}.
\end{align}
The equation above suggests we can fix $s_1$ and solve
\begin{align}
    & \max_{s_0\in\{0,1\}^{n_0}}\left\{\inner{0}{s_0}+\cxiss \right\}                         \\
  = & \inner{1}{s_1}+\max_{s_0\in\{0,1\}^{n_0}}\left\{\inner{0}{s_0}+\cross(s_0,s_1)\right\},
\end{align}
and take the maximum output over all $s_1$. And the optimization
\begin{equation}
  \max_{s_0\in\{0,1\}^{n_0}}\left\{\inner{0}{s_0}+\cross(s_0,s_1)\right\}\label{TPROBLEM}
\end{equation}
can also be tackled with QAOA, which gets at least $\alpha$ of the optimum. Thus, we can use QAOA to approximately solve \eqref{TPROBLEM} for all $s_1\in T$, and take the maximum answer. The procedure described above achieves solutions no worse than \eqref{RHSAPP}, meaning that this multiple QAOA procedure shares the same performance guarantee with the previous procedure using one single QAOA. The overall procedure is demonstrated using an example in \fref{COMBFIG}.

Now it remains to show how to solve \eqref{TPROBLEM} with QAOA. That is, we will need to implement
\begin{equation}
  \ket{s_0}\mapsto \me{-\mi\gamma_k\cdot (\inner{0}{s_0}+\cross(s_0,s_1))}\ket{s_0}.\label{TQAOASINGLE}
\end{equation}
The first part is easy, and it can be done with \fref{UXY} as we have discussed before. For the second part, we use $s_0(x)$ to denote the color of vertex $x$ in $s_0$, and $s_1(y)$ for the color of vertex $y$ in $s_1$, then
\begin{equation}
  \begin{aligned}
    \cross(s_0,s_1)= & \sum_{(x,y)\in E-E_0-E_1} [s_0(x)\not= s_1(y)] \\
    =                & \sum_{x\in V_0}\sum_{\substack{y\in V_1 \wedge \\ (x,y)\in E-E_0-E_1}}[ s_1(y)=\neg s_0(x)].
  \end{aligned}
\end{equation}
For convenience, let $\Gamma_{x,0/1}$ be the contribution from vertex $x\in V_0$ to cut when it is colored $0/1$, that is
\begin{equation}
  \Gamma_{x,0/1}=\sum_{\substack{y\in V_1 \wedge \\ (x,y)\in E-E_0-E_1}}[ s_1(y)= 1/0],
\end{equation}
then
\begin{equation}
  \cross(s_0,s_1)= \sum_{x\in V_0}\Gamma_{x,s_0(x)}.
\end{equation}
So, for the $x$-th qubit with state $0/1$, we need to add a phase factor $\me{-\mi \gamma_k\cdot \Gamma_{x,0/1}}$. This can be done with $R_z$ gate, as shown in \fref{CXIRZ}. Therefore, the overall circuit for \eqref{TQAOASINGLE} as a part of the QAOA for solving \eqref{TPROBLEM} would look like \fref{TQAOASINGLECIRC}. This circuit consists of only CNOT and $R_z$ gates, therefore remains NISQ-friendly.

\cmt\added{For a $\beta$-guarantee set $T$, the method described in this section requires $|T|$ runs of QAOA optimization. Hopefully that for graphs with certain properties, we can theoretically derive that there exists $T$ such that $|T| \in \poly(n)$, but even if no theoretical bound can be estimated, a practically acceptable size will still be enough. So it will be of significant interest to design an practical algorithm that constructs a
  moderate-size $T$, with or without a theoretical bound for its size.}

\subsection{A heuristic approach based on local search} 
In order to demonstrate the viability of \cmt\tian{our} coupling framework for QAOA circuit design, we set up experiments to show its performance when it is applied to real Max-Cut instances. As pointed out in previous sections, currently we are not able to efficiently construct $\beta$-guarantee set as required, so a heuristic method needs to be devised.

Here we give a heuristic method that converges in polynomial time and demonstrate its numerical results, compared with previous methods. In previous sections, we regard the classical algorithm that finds good coloring for $V_1$ as a subprocedure called by QAOA which together forms our coupling framework. But now what we actually do is to first fix a coloring $s_1$ of $V_1$, and compute the optimal coloring of $V_0$ corresponding to it. Thus, we can \cmt\tian{iteratively} update the chosen $s_1$ following a local search scheme to gradually reach a good result. The overall structure of our heuristic method will look like  \aref{QAOALS}.

\begin{algorithm}[h]
  \caption{Coupling QAOA with local search}
  \label{QAOALS}
  \KwData {A graph $G=(V,E)$.}
  Find a dense subgraph $G_0=(V_0,E_0)$ for $G$;

  Setup an initial coloring for $V_1$, denoted by $s_1$;

  $p\gets \mathbb{E}\left(\text{QAOA}\big| V_1 \text{ is colored } s_1\right)$;

  \While{true}{

    \For{$u\in V_1$}{
      $s_1'\gets s_1$ with the color of $u$ fliped;

      $q\gets \mathbb{E}\left(\text{QAOA}\big| V_1 \text{ is colored } s_1'\right)$;

      $r_u\gets (q,s_1')$;
    }
    $(q,s_1')\gets \max r$;

    \eIf{$q>p$}{
      $(p,s_1)\gets (q,s_1')$;
    }{
      break;
    }
  }
  \Return $p$;
\end{algorithm}



We also tested the algorithm proposed by \cite{zhou_qaoa--qaoa_2023} for comparison. We strengthen that \cite{zhou_qaoa--qaoa_2023} is able to split the graph arbitrarily so it is able to solve larger Max-Cut instances using small-scale quantum computers , but here we only consider such occasions that the number of qubits is slightly smaller than the graph size, and we demonstrate that our method has the potential to make better use of such amount of available qubits. Experiment results are shown in \fref{SIMRES}.

\section{Conclusion}
\label{sec:conc}
In this work, we propose a coupling framework for QAOA circuit design and demonstrate it using Max-Cut as an example. Our method replaces a certain portion of quantum computation with classical computation to save quantum resources at the expense of losing a bit of approximation performance.

Our method relies on an efficient quantum-implementable classical algorithm that approximately solves the $\best$ problem defined in \eqref{OMGDEF}. As the problem is as hard as Max-Cut and the classical algorithm can only consume $o(n)$ qubits when implemented as a quantum oracle, currently we are not able to design a classical algorithm as required\cmt{\tian{hardness???}}, but we present some numerical observation that will lead our way towards one. However, despite designing a required algorithm with theoretic guarantee being hard, we give a heuristic method inspired by local search algorithm. Numerical experiments show good performance which suggests our coupling framework might be able to help give more satisfactory solutions to larger Max-Cut problem instances using currently limited number of qubits. Future work remains to design a classical algorithm that solves \eqref{OMGDEF} with rigorous analysis and performance guarantee.

\section{Data and code availability}
The datasets and the source code that generates them are available at \cite{lu_lucidaluqaoa--fewer-qubits_2023}.

\FloatBarrier

\begingroup
\sloppy
\printbibliography
\endgroup

\appendix

\section{Supplementary experiments for $\GS$ size}
\label{SUPEXSEC}

We change the size of graph, approximation ratio requirement $\beta$, and split ratio ${n_0}/{n}$ and compute $\GS(U_0)$. Experiments under different setups all give positive results, \cmt\added{showing slow (non-exponential) increment of $\log |\GS(U_0)|$ in accordance with $|U_0|$,} see \fref{SUPEX}.

\begin{figure}[h]
  \centering
  \begin{subfigure}[b]{0.49\textwidth}
    \includegraphics[width=\textwidth]{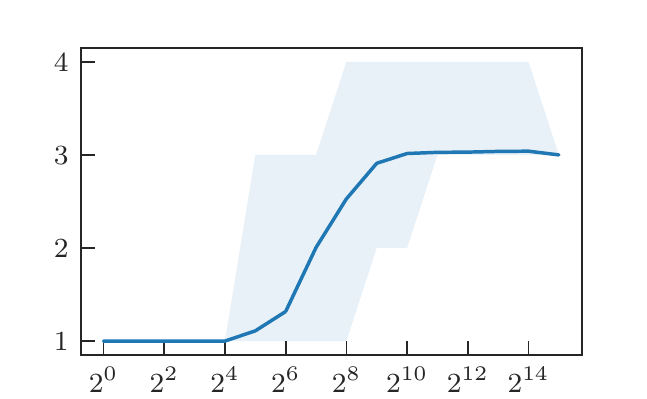}
    \caption{$n=32, {n_0}/n=0.5, \beta=0.9$}
  \end{subfigure}
  \begin{subfigure}[b]{0.49\textwidth}
    \includegraphics[width=\textwidth]{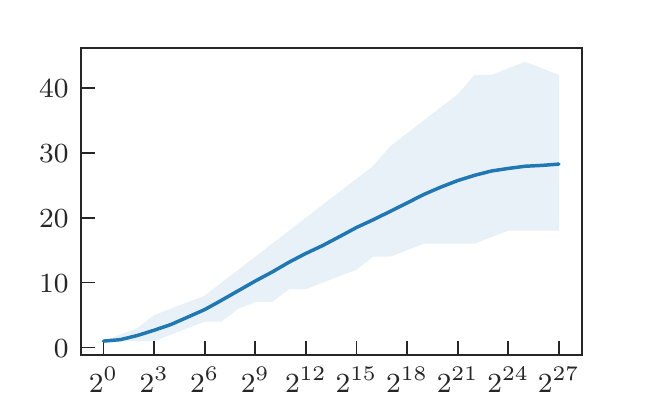}
    \caption{$n=35, {n_0}/n=0.8, \beta=0.95$}
  \end{subfigure}

  \begin{subfigure}[b]{0.49\textwidth}
    \includegraphics[width=\textwidth]{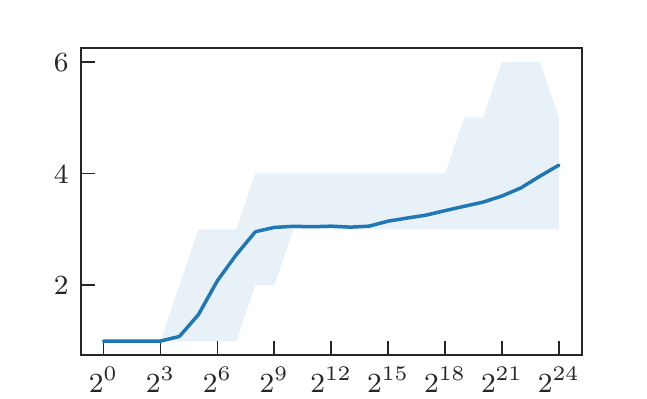}
    \caption{$n=40, {n_0}/n=0.625, \beta=0.9$}
  \end{subfigure}
  \begin{subfigure}[b]{0.49\textwidth}
    \includegraphics[width=\textwidth]{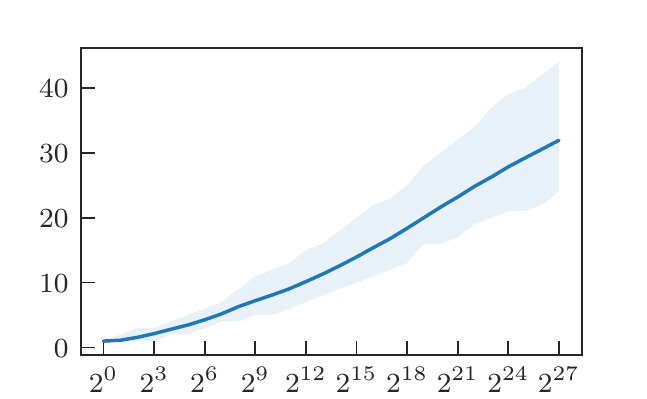}
    \caption{$n=50, {n_0}/n=0.8, \beta=0.95$}
  \end{subfigure}

  \caption{Supplementary experiments for $\GS$ size.}
  \label{SUPEX}
\end{figure}

\section{Experiment settings}
QAOA circuit for Max-Cut only involves $\exp\left(\mi\theta X\right)$ and $\exp\left(\mi\theta ZZ\right)$ gates, which is simply element-wise vector multiplication instead of matrix-vector multiplication under $\{\ket{+},\ket{-}\}^{\otimes n}$ and $\{\ket{0},\ket{1}\}^{\otimes n}$, respectively, while the basis transform between the two basis can be done with fast Walsh-Hadamard transform \cite{yarlagadda_hadamard_1996}. So here we implement a statevector simulator dedicated to QAOA with the help of FWHT, and we use \textit{Numba} \cite{lam_numba_2015} library to achieve parallel speedup with GPU.

In order to fully present the power of our circuit design and prevent performance loss due to unsatisfactory optimization results given by QAOA, we want to have QAOA approximation ratio as close to $1$ as possible. Therefore, we add much redundancy to $p$, the number of QAOA layers -- $p$ is set to $n$ when we use QAOA to perform optimization over $n$-qubit problem instance. More layers means more parameters which leads to better circuit expressibility, thus we will more likely reach a state with higher expectation.

Like many related works, we use Broyden–Fletcher–Goldfarb–Shanno (BFGS) method \cite{broyden_convergence_1970, fletcher_new_1970, goldfarb_family_1970, shanno_conditioning_1970} to optimize QAOA parameters for its quasi-Newton nature provides quick convergence rate. We choose the implementation of BFGS via \textit{scipy} \cite{virtanen_scipy_2020} library , and limit the maximum number of iterations to $100$, with other parameters of BFGS left default. And to avoid getting stuck in abysmal local optima, we adopt the parameter initialization method given by \cite{sack_quantum_2021} and set its hyper-parameter $\delta t=0.56$.

While implementing our coupling method, we select dense subgraphs using a heuristic algorithm given in \cite{noauthor_heuristic_1994} and partition the graph into two parts $(V_0,V_1)$, and run the algorithm defined in \aref{QAOALS} starting with $S_0=\emptyset$. And when implementing \cite{zhou_qaoa--qaoa_2023}, we use the random partition policy described in the paper and respectively use $|V_0|$-qubit QAOA and $|V_1|$-qubit QAOA to solve Max-Cut inside the two subgraphs, and numerically compute the expected output according to the two resultant probability distributions. When implementing \cite{zhou_qaoa--qaoa_2023}, the parameters are changed to have better optimization performance: we set $p=2n$ and iteration limit of BFGS to $500$ up to try to decrease the performance loss of \cite{zhou_qaoa--qaoa_2023} due to the inaccuracy of QAOA.

When testing the performance of \GW algorithm, we use the SDP solver provided in \textit{MOSEK} \cite{aps_introducing_2023}, wrapped with \textit{cvxpy} \cite{diamond_cvxpy_2016} package.

\section{Polynomial space approximation algorithm solving $\best$ problem}
\label{BESTGW}

When polynomial space (instead of only logarithmic) is allowed, $\best$ problem can be approximately solved using a modified version of \GW algorithm, enjoying the same $.878$ approximation guarantee. But note that this algorithm requires $\Omega(n)$ space, thus it cannot be straightly used in constructing \eqref{CLSORAC}.

We first review the workflow of \GW algorithm. Max-Cut has the following standard QUBO (Quadratic unconstrained binary optimization) formalism
\begin{equation}
  \max \left\{\sum_{(x, y) \in E} \frac{1-c_x\cdot c_y}2 \mdc c_x^2=1\right\}.\label{eq58}
\end{equation}
\GW algorithm uses unit real vectors $v_x\in \mathbb{R}^n$ to approximate $c_x$, relaxing \eqref{eq58} as
\begin{equation}
  \max \left\{\sum_{(x,y) \in E} \frac {1-v_x\cdot v_y}{2}:\left\|v_{x}\right\|^2=1\right\}.
\end{equation}
Consider a matrix $M$ with entries defined as $M_{xy}:=\braket{v_x}{v_y}$, then it is known that $M$ is positive semidefinite. Moreover, for any positive semidefinite $M$, there exists $\{v_x\}$ satisfying $M_{xy}:=\braket{v_x}{v_y}$. Thus, there is a one-to-one correspondence between $\{v_x:\|v_x\|^2=1\}$ and $\{M:M_{xx}=1\}$. Now let $L$ be the laplacian matrix of the graph with entries defined as
\begin{equation}
  L_{xy}:=\left\{
  \begin{aligned}
     & \deg(x), & x=y,              \\
     & -1,      & (x,y)\in E,       \\
     & 0,       & \text{otherwise,}
  \end{aligned}
  \right.
\end{equation}
then,
$$
  \begin{aligned}
    \frac {\mathbf{tr}(LM)}4=\frac 14\sum_{x\in V}\sum_{y\in V}L_{xy}(v_x\cdot v_y)= & \frac 14\left(\sum_{x\in V}\deg(x)-\sum_{(x,y)\in E} 2\cdot v_x\cdot v_y\right) \\
    =                                                                                & \sum_{(x,y)\in E} \frac {1-v_x\cdot v_y}2.
  \end{aligned}
$$
From the discussion above, \eqref{eq58} is relaxed to the following SDP formulation:
\begin{equation}
  \max_{M\succeq 0}\left\{\mathbf{tr}(LM)/4:M_{xx}=1\right\}.\label{eq61}
\end{equation}
From \eqref{eq61}, we can get the optimal $\{v_x\}$, after which we select a random hyperplane through the origin, separating $\{v_x\}$ into two sets, which forms a cut.
\begin{lemma}[{\cite[Theorem~3.3]{goemans_improved_1995}}]
  \label{lemma3}
  For arbitrary set of $\{v_x\}$, the expected size of the cut obtained by separating the set using a random hyperplane, is at least $.878$ times $\mathbf{tr}(LM)/4$.
\end{lemma}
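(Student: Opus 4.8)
The statement is the classical Goemans--Williamson rounding guarantee, so the plan is to analyze the random-hyperplane procedure edge by edge and reduce everything to a single scalar inequality. First I would parametrize the random hyperplane by its unit normal vector $r$ drawn uniformly from the sphere $S^{n-1}$, and assign vertex $x$ to one side according to the sign of $r\cdot v_x$. Under this rounding, an edge $(x,y)$ is cut precisely when $v_x$ and $v_y$ lie on opposite sides of the hyperplane, i.e.\ when $\mathrm{sign}(r\cdot v_x)\neq \mathrm{sign}(r\cdot v_y)$.

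The next step is to compute the probability of this event. By rotational symmetry, only the projection of $r$ onto the two-dimensional plane spanned by $v_x$ and $v_y$ matters, and that projection points in a uniformly random direction in the plane. Writing $\theta_{xy}=\arccos(v_x\cdot v_y)$ for the angle between the two unit vectors, the edge is cut exactly when the projected direction falls into one of two antipodal arcs whose total angular measure is $2\theta_{xy}$ out of $2\pi$. Hence
\begin{equation}
  \Pr[(x,y)\text{ is cut}]=\frac{\theta_{xy}}{\pi}=\frac{\arccos(v_x\cdot v_y)}{\pi}.
\end{equation}
By linearity of expectation the expected cut size equals $\sum_{(x,y)\in E}\arccos(v_x\cdot v_y)/\pi$, which I want to compare against $\mathbf{tr}(LM)/4=\sum_{(x,y)\in E}(1-v_x\cdot v_y)/2$.

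The comparison reduces to a termwise scalar inequality: I would show that for every $t\in[-1,1]$,
\begin{equation}
  \frac{\arccos(t)}{\pi}\ \ge\ \alpha\cdot\frac{1-t}{2},\qquad \alpha:=\min_{-1\le t\le 1}\frac{2\arccos(t)}{\pi(1-t)}.
\end{equation}
Summing over all edges then yields $\mathbb{E}[\text{cut}]\ge \alpha\cdot \mathbf{tr}(LM)/4$, and it remains only to verify $\alpha\approx .878$. I would do this by differentiating the ratio $f(t)=2\arccos(t)/\bigl(\pi(1-t)\bigr)$, locating its unique interior critical point from the transcendental stationarity condition, and evaluating $f$ there. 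This last numerical minimization is the genuine obstacle: the critical point has no closed form, so establishing the value $.878$ rigorously requires a careful analytic bound (showing $f$ is unimodal on $(-1,1)$ and controlling $f$ at the stationary point), rather than the probability computation, which is routine once the symmetry argument is in place.
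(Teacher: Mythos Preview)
Your proposal is correct and is precisely the classical Goemans--Williamson argument: compute the per-edge cutting probability $\arccos(v_x\cdot v_y)/\pi$ via rotational symmetry, sum by linearity, and reduce to the scalar inequality $\tfrac{2}{\pi}\arccos t \ge \alpha(1-t)$ with $\alpha=\min_{t\in[-1,1)}\tfrac{2\arccos t}{\pi(1-t)}\approx .87856$. The paper does not supply its own proof of this lemma at all---it simply cites \cite[Theorem~3.3]{goemans_improved_1995}---so your write-up in fact reproduces the argument the paper defers to.
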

As \eqref{eq61} is a relaxation of Max-Cut \eqref{eq58}, the optimal $\mathbf{tr}(LM)/4$ will be no less than Max-Cut. This fact along with \lmref{lemma3} implies \lmref{lemma4}.
\begin{lemma}
  \label{lemma4}
  \GW algorithm have $.878$ approximation ratio on Max-Cut problem.
\end{lemma}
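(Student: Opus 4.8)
The plan is to obtain the statement as an immediate corollary of the two facts already assembled above: that the semidefinite program \eqref{eq61} is a \emph{relaxation} of the Max-Cut QUBO \eqref{eq58}, and that random-hyperplane rounding recovers at least a $.878$ fraction of the SDP objective (\lmref{lemma3}). So the argument is a short chain of inequalities, and the only real work is pinning down the direction of the relaxation bound.

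First I would fix the input graph, let $\mathrm{OPT}$ denote its optimal Max-Cut value, and let $M^{*}$ be an optimizer of \eqref{eq61} with associated unit vectors $\{v_x^{*}\}$. To see that the SDP optimum dominates $\mathrm{OPT}$, I would exhibit every integral cut as a feasible point of \eqref{eq61}: given signs $c_x\in\{-1,+1\}$ (i.e.\ a feasible point of \eqref{eq58}, where $c_x^2=1$), set all vectors collinear, $v_x=c_x u$ for one fixed unit vector $u$. Then $\|v_x\|^2=c_x^2=1$ and the Gram matrix $M_{xy}=c_x c_y$ is rank one, hence positive semidefinite, so $M$ is feasible for \eqref{eq61}; moreover the identity $\mathbf{tr}(LM)/4=\sum_{(x,y)\in E}(1-v_x\cdot v_y)/2$ derived above specialises, with $v_x\cdot v_y=c_x c_y$, to exactly the cut size of $c$. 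Since \eqref{eq61} is a maximisation over a feasible set containing all such integral points, its optimum is at least the best integral value:
\[
  \mathbf{tr}(LM^{*})/4 \ \ge\ \mathrm{OPT}.
\]

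Next I would apply \lmref{lemma3} to the particular set $\{v_x^{*}\}$: the expected size of the cut produced by the random hyperplane is at least $.878\cdot\mathbf{tr}(LM^{*})/4$. Combining this with the relaxation bound gives
\[
  \mathbb{E}(\text{cut}) \ \ge\ .878\cdot \mathbf{tr}(LM^{*})/4 \ \ge\ .878\cdot \mathrm{OPT},
\]
so the expected approximation ratio of the rounded solution is at least $.878$, which is the claim. Because both ingredients are already in hand, there is no genuine obstacle in the argument; the single point that needs care is the orientation of the relaxation inequality — that enlarging the feasible set in a \emph{maximisation} can only raise the optimum, so the SDP value sits \emph{above} $\mathrm{OPT}$ and the factor $.878$ therefore propagates all the way down to the combinatorial optimum rather than the other way around.
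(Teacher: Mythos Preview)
Your proposal is correct and follows exactly the same approach as the paper: use that \eqref{eq61} is a relaxation of \eqref{eq58} so its optimum dominates $\mathrm{OPT}$, then apply \lmref{lemma3} to conclude. You have simply spelled out the relaxation step (embedding each $\{\pm 1\}$ assignment as a rank-one feasible $M$) that the paper asserts in a single sentence.
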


To approximately solve the modified version of Max-Cut, $\best$, only slight amendment to the described \GW algorithm is required. In $\best$, for some vertices, their colors are antecedently assigned, so  we fix their $\{v_x\}$ vectors in advance. For vertices which are assigned different colors, we need to make sure their corresponding vectors appear at different sides of the random hyperplane, and to do this, we fix the vectors so that if two vertices have different colors, their corresponding vectors point to opposite directions. One possible choice is
\begin{equation}
  v_x:=\begin{cases}
    (+1,0,\dots,0), & \text{color of $x$ is $0$}, \\
    (-1,0,\dots,0), & \text{color of $x$ is $1$}.
  \end{cases}
\end{equation}
And under such configuration, the entries of $M$ subject to further constraints. If $z$ is a free vertex, the vertices whose colors are fixed $0$ are $x_1,x_2,\dots,x_a$ and the vertices whose colors are fixed $1$ are $y_1,y_2,\dots,y_b$, then for each $z$,
\begin{align}
  M_{zx_i}-M_{zx_{i+1}} & =0\quad (\forall 1\le i\le a-1), \\
  M_{zy_i}-M_{zy_{i+1}} & =0\quad (\forall 1\le i\le b-1), \\
  M_{zx_1}+M_{zy_1}     & =0.
\end{align}
And,
\begin{align}
  M_{x_iy_j}=-1\quad (\forall 1\le i\le a,1\le j\le b).
\end{align}
Then we solve \eqref{eq61} under the additional constraints above. Similar to the reasoning towards \lmref{lemma4}, we have
\begin{lemma}
  The modified \GW algorithm described in this section have $.878$ approximation ratio on $\best$ problem.
\end{lemma}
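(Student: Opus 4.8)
The plan is to mirror the reasoning already established for \lmref{lemma4}, but carried out under the enlarged constraint set that encodes the fixed-color vertices. The essential observation is that the $\best$ problem is, from the perspective of the SDP relaxation, just Max-Cut with some of the vector variables clamped to $\pm e_1$. So the whole $.878$ machinery should transfer verbatim once I verify that (i) the constrained SDP is still a valid relaxation of the constrained cut problem, and (ii) the random-hyperplane rounding still produces a feasible solution to $\best$, i.e.\ it respects the prescribed colors on the fixed vertices.

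First I would show that the constrained SDP \eqref{eq61} (augmented with the equalities $M_{zx_i}=M_{zx_{i+1}}$, $M_{zy_j}=M_{zy_{j+1}}$, $M_{zx_1}+M_{zy_1}=0$, and $M_{x_iy_j}=-1$) is a relaxation of $\best(s_0)$. The point is that any genuine coloring $s_1$ of the free vertices, together with the antecedently fixed coloring of $V_0$, yields a $\pm 1$ assignment $\{c_x\}$; mapping $c_x\mapsto v_x=(c_x,0,\dots,0)$ gives unit vectors whose Gram matrix $M$ satisfies every added constraint (two fixed vertices of opposite color have $v_x\cdot v_y=-1$, and the linear constraints on $M_{zx_i}$ are exactly the statement that $v_z\cdot v_{x_i}$ is the same for all same-colored fixed vertices and flips sign between the two color classes). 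Hence $\operatorname{\mathbf{tr}}(LM)/4$ evaluated at this $M$ equals the cut $\cxiss$, so the optimum of the constrained SDP is at least $\best(s_0)$.

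Next I would invoke \lmref{lemma3} on the vectors $\{v_x\}$ recovered from the optimal constrained $M$: the random hyperplane produces an expected cut of at least $.878\cdot\operatorname{\mathbf{tr}}(LM)/4 \ge .878\cdot\best(s_0)$. The part requiring genuine care — and what I expect to be the main obstacle — is feasibility of the rounding. Because the fixed vectors were pinned to $(\pm 1,0,\dots,0)$, the random hyperplane through the origin does not automatically separate them correctly; indeed a generic hyperplane could place two oppositely-colored fixed vectors on the same side, corrupting the prescribed coloring of $V_0$. The clean fix is to observe that $(+1,0,\dots,0)$ and $(-1,0,\dots,0)$ are antipodal, so \emph{every} hyperplane through the origin separates them, automatically assigning opposite sides to the two fixed color classes; one then relabels the two sides so that the class containing $(+1,0,\dots,0)$ receives color $0$. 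Thus every fixed vertex lands on its prescribed side with probability $1$, the rounded solution is a legitimate feasible coloring for $\best$, and the expectation bound is over the rounding of the free vertices only.

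Putting these together gives $\mathbb{E}[\text{rounded cut}] \ge .878\,\operatorname{\mathbf{tr}}(LM)/4 \ge .878\,\best(s_0)$, which is precisely the claimed approximation ratio; derandomization or the usual argument that some hyperplane achieves at least the expectation finishes it. The only subtlety beyond \lmref{lemma4} is confirming that adding linear equality constraints to a semidefinite program keeps the feasible region convex and the problem polynomial-time solvable, which is standard, so no new analytic difficulty arises there.
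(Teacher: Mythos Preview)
Your proposal is correct and follows exactly the route the paper intends: the paper gives no proof beyond the remark that the reasoning parallels \lmref{lemma4}, and your argument fills in precisely those details---relaxation, \lmref{lemma3}, then the feasibility check via antipodality of the pinned vectors---that the paper leaves implicit. If anything, your treatment of feasibility (that the fixed vectors $\pm e_1$ are separated by a generic hyperplane through the origin, so the prescribed coloring on $V_0$ is preserved almost surely) is more careful than the paper's own exposition.
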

The existence of efficient classical non-trivial approximation algorithm solving $\best$ suggests that this problem, non-rigorously speaking, has close difficulty to original Max-Cut problem.

\end{document}